\newcommand \remove[1] {}
\newcommand \hide[1] {}
\newcommand\red[1]{{\color{red} #1}}
\def\Sec {\S}
\def \ham {\mathcal{H}}
\def \hence {\Rightarrow}
\newtheorem{Theorem}{Theorem}
\tikzset{
	auto,
	compartment/.style={
		rectangle, minimum size=6mm, rounded corners=2mm,
		thick, draw=black!15, top color=white,bottom color=black!30
	},
	bigcompartment/.style={
		rectangle, minimum size=9mm, rounded corners=2mm,
		thick, draw=black!15, top color=white,bottom color=black!20
	},
	point/.style={
		circle, inner sep=2pt, fill=black!5
	},
	mytextbox/.style={
		rectangle, text=black!50, thin, 
		draw=white, top color=white,bottom color=white, fill=white
	}

}
\newcommand{\sirboxes}{
	\node(S)[compartment]{S};	
	\node(SI)[right of= S]{};
	
	\node(I)[compartment, right=of SI]{G};

	\node(Z)[compartment, above of= I]{Z};
	\node(P)[compartment, below of= I]{P};
      \node(SII)[right of= I]{};
}
\newcommand{\sirvec}{

\path[->, very thick] (S) edge [ align=left] node {\small Zombie creation\\ \small
$\gamma \beta  Z S+ \beta G  S u_Z$} (Z);

\path[->, very thick] (S) edge [swap,align=right] node {\small$\beta G S u_p$ \\\small Passive creation } (P);

\path[->, dotted, purple, very thick,bend left=90] (Z) edge [align=left] node {\small Halting zombies\\\small$\pi \beta G Z u_h$} (P);
}
\newcommand{\ispoint}{

\path[->, green, very thick, bend right=30] (I) edge ($(P.west)!0.5!(S.east)$);
\path[->, green, very thick, bend left=30] (I) edge ($(Z.west)!0.5!(S.east)$);
\path[->, green, very thick, bend right=60] (Z) edge ($(Z.west)!0.5!(S.east)$);
\path[->, green, dotted, very thick] (I) edge (SII);

}
\title{\LARGE \bf
Visibility-Aware Optimal Contagion of Malware Epidemics
}
\author{Soheil Eshghi, Saswati Sarkar, and Santosh S. Venkatesh\thanks{Soheil Eshghi
is with the Yale Institute of Network Science (YINS), Yale University, New Haven, CT, {\tt\small soheil.eshghi@yale.edu}, Saswati Sarkar, and 
Santosh S. Venkatesh are with the Electrical and Systems Engineering Department, University of Pennsylvania, Philadelphia, PA, {\tt\small swati, venkates@seas.upenn.edu}.
}
}
\begin{document}

\maketitle
\thispagestyle{plain}
\pagestyle{plain}

\begin{abstract}
Recent innovations in the design of computer viruses have led to new trade-offs for the attacker. Multiple variants of a malware may spread at different rates and have different levels of visibility to the network. In this work we examine the optimal strategies for the attacker so as to trade off the extent of spread of the malware against the need for stealth. We show that in the mean-field deterministic regime, this spread-stealth trade-off is optimized by computationally simple single-threshold policies. Specifically, we show that only one variant of the malware is spread by the attacker at each time, as there exists a time up to which the attacker prioritizes maximizing the spread of the malware, and after which she prioritizes stealth. 
\end{abstract}
\begin{IEEEkeywords}
visibility, optimal contagion, malware epidemics.
\end{IEEEkeywords}
\vspace{-0.2in}
\section{Introduction}\label{sec:intro}

Malware (i.e., viruses, worms, trojans, etc.) has been a prominent feature of computer networks since the 1980's \cite{murray1988application}, and has evolved with the growing capabilities of computing technology. Anderson {\em et al.} \cite{anderson2012measuring} estimated that malware caused \$370m of damage globally in 2010 alone. Traditionally, malware was designed with the express aim of infecting as many machines as possible, leading to the mass epidemics of the early 2000's (e.g., Blaster\cite{bailey2005blaster}). More recently, the focus has shifted to more ``surgical'' strikes where visibility is highly undesirable, as awareness can lead the intended target to cease communication (e.g., by quarantining the targets). The malware Regin was only discovered (in 2014) after operating since at least 2008, and was so complex that even when its presence was detected, it was not possible to ascertain what it was doing and what it was targeting\cite{Regin}. Stuxnet, as another example, was designed to attack a specific control software used in centrifuges \cite{falliere2011w32} and did not steal or manipulate data, or receive any command instructions from remote sources so as to maintain stealth \cite{langner2013stuxnet}. {Furthermore, its very presence in a system was undetectable due to a rootkit\cite{falliere2011w32}.} Yet, it was discovered and remedied after it spread outside its target area \cite{nyt} (cf. Duqu, Flame, and Gauss \cite{bencsath2012cousins}). Thus there is a new {critical} trade-off for the attacker --- to ensure maximum damage while minimizing visibility to the defender.  

 We now describe different dimensions of this trade-off. Malware spreads from one computing device to another when there is a communication opportunity between the devices. In networks, both wired and wireless, inter-node communication can be visible to the network administrator, and can serve as a way of detecting the presence of malware before its function is fully understood.
However, the attacker also has a conflicting onus to ensure the rapid propagation of her program, as computer systems evolve at a rapid pace, and the exploit(s) that the malware targets will be noticed and patched in due course. Furthermore, some malware designers work to specific deadlines  --- e.g., Stuxnet was due to become inoperational in June 2012 \cite{zetter2011stuxnet}. {On the other hand, the second variant of Stuxnet was released to spread faster (and thus in a more risky manner) after the designers were concerned about its limited spread \cite{langner2013stuxnet}.}  Thus, an attacker will seek to minimize her communication footprint while still trying to ensure the timely spread of the malware. 

{In particular, we consider the case where two variants of a single emerging malware spread in a network that is unaware of their existence}. One spreads aggressively in every contact, and is thus visible to the network due to its communications, while the other, passive, variant does not spread subsequent to infecting a node. {We assume that the network cannot determine the infection state of any particular node and does not have patches to remedy the attack, but can detect an attack by looking at the unusual communication patterns (e.g., the transfer of malware between nodes) resulting from the malware attack.} Coordinating distributed attacks comes at the cost of added visibility due to communication and is susceptible to timing errors in the hosts. Thus, we focus on the case where distributed nodes that are infected are not asked to coordinate, as was the case in Regin and Stuxnet.
The natural question that arises is to  characterize the structure of the optimal malware variant mix that the attacker will spread at each instant depending on their goal structures and the communication mechanisms that they may have at their disposal. This is an imperative first step to devising remedies for such attacks.

\vspace{-0.1in}
\subsection{Problem Description}
 We  consider a network under attack by these two variants of a malware. Depending on their infection status, nodes can be divided into 4 groups:\footnote{{Note that this classification and the resulting dynamics are an abstraction of real world networks and sacrifice some accuracy for modeling simplicity. However, these assumptions are common in cybersecurity literature, e.g., \cite{piqueira2005epidemiological, murray1988application} and lead to significant insight.}} Germinators (G), Susceptibles (S), Zombies (Z), and Passives (P). We now describe these states, as well as their dynamics and the impact of the attacker's control (as will be elucidated in \S \ref{subsec:simple_model}). We also outline an augmentation to the model that is considered in \S \ref{subsec:killing_model} and adds a further possible mechanism of interaction and control to the dynamics:

1) {\em Germinators} (G):

- are a {\em fixed} (potentially very small) fraction of nodes,
- are the only nodes under the attacker's direct control,

- are the only nodes that can {\em choose} how to interact with susceptibles and zombies depending on the goal of the attacker: at each encounter with a susceptible, they decide whether to turn it into a zombie or a passive, or to leave it as a susceptible.

- damage the network by executing malicious code,

- are visible to the network due to their communications. 

- in an augmentation in \S \ref{subsec:killing_model}, we add a further mechanism of interaction ({\em halting}) whereby the germinators, upon contact with zombies, can turn them into passives (i.e., stopping them from spreading the message any further). This can potentially lead to the attacker initially utilizing epidemic spreading and then halting the spread once the marginal benefit of infection is overtaken by the marginal effect of visibility, leading to to a potentially longer propagation of the zombies. 

2) {\em Susceptibles} (S):

- are nodes that have not received any variant of the malware, 

- upon receipt of the malware from germinators, they can turn into zombies ($Z$) or passives ($P$).

- upon receipt of the malware from zombies, they will turn into zombies ($Z$).

3) {\em Zombies} (Z):

- have received the aggressive malware variant, 

- damage the network by executing malicious code,

- will continue to propagate the aggressive variant indiscriminately (i.e., upon meeting a susceptible, will turn into a zombie), 

- are visible to the network due to their communications.

- in the augmentation in \S \ref{subsec:killing_model}, the additional mechanism of {\em halting} can turn zombies into passives.

4) {\em Passives} (P):

- have received the passive variant of the malware,

- damage the network by executing malicious code,

- will {\em not} propagate the malware variant any further, 

- contrary to germinators and zombies, are {\em invisible} to the network as they do not communicate with other nodes to spread the malware henceforth.

These  states and their properties are summarized in Table \ref{fig:status}. {We assume that all nodes mix \emph{homogeneously} (i.e., contacts between nodes are independent and exponentially distributed) with rates that only depends on the infection states of the two nodes. Thus, all nodes that are in one infection state can be assumed to be identical from the perspective of the malware. The purpose of this abstraction is to simplify the interaction model for analysis in the population limit (i.e., as the number of nodes increases).}

\begin{table}
\centering
\begin{tabular}{|c||c|c|c|}
\hline 
 State & Visibility & Growth over time & Propagation
\tabularnewline 
\hline
\hline
 S & N & Only decrease & -
\tabularnewline 
\hline
 G & Y & Fixed & Y
\tabularnewline 
\hline
 Z & Y & Increase or decrease & Y
\tabularnewline 
\hline
 P & N & Only increase & N
\tabularnewline 
\hline
\end{tabular}
\caption{The states of the SGZP model and their characteristics. ``Visibility'' denotes whether the infection state of the node is detectable by the network defender. ``Growth over time'' determines the possible changes in the fraction of nodes in each state over time (note that the only case in which zombies can decrease is the dynamics outlined in \S \ref{sec:Optimal_killing}). Finally, ``Propagation'' determines whether a node in that state can spread the malware to a susceptible node upon contact.}\label{fig:status}
\end{table}

In these models, the attacker {\em controls} the mixture of zombie and passive malware variants through the germinators under its direct control. Whenever a germinator meets a susceptible, based on the control chosen by the attacker, it spreads either the zombie or passive variant of the malware to the susceptible, or leaves it as it is. In the dynamics in \S \ref{sec:Optimal_killing}, the germinator has an additional controlled mechanism of action, whereby upon meeting a node with the zombie variant of the malware, it can replace the variant with the passive one (a ``halting'' mechanism). These controls are assumed to be {\em piecewise continuous}, but they can take any value between zero and one, which determines the percentage of relevant interactions for which the specified action happens.
We do not assume that all nodes make the same spreading decision at each time instance: the attacker can assign a certain uniformly distributed and possibly varying fraction of germinators to make the same decision at each time, or it could allow all agents to make one of the two decisions with a certain, possibly varying probability at each time. The outcome of both cases is that a certain uniformly distributed percentage of interactions (derived from the attacker's controls) lead to the creation of zombies and passives, and the rest have no effect on the potential target.

 Later, we also investigate the effect of defense strategies on the optimal spread of malware variants (\S \ref{subsec:beta_model}). In these defense strategies, the defender  limits the effective contacts of nodes using a pre-determined function of malware visibility (which changes over time) as a means to limit the spread of malware. We consider two classes of network defense functions: affine and sigmoid. These defense strategies, however, come at the cost of stopping legitimate communication within the network. This is akin to choosing the communication ranges of nodes as a decreasing function of the visibility of the malware, which is a form of {\em quarantine}. 

We allow the attacker to choose the malware spreading controls so as to maximize a measure of overall damage (described in \S \ref{subsec:cost}). 
We first consider a damage function that depends on a) malware efficacy, which is a function of the aggregate number of zombies and passives, and b) malware visibility, which is a function of the number of zombies (for the models in \S \ref{subsec:simple_model} and \S \ref{subsec:killing_model}). Then, we consider a damage function where malware efficacy is the attacker's only direct concern, and is thus the damage function to be maximized, for the case where visibility is built into the network dynamics through a network defense policy which is a function of the fraction of zombies (as in the model in \S \ref{subsec:beta_model}). These formulations, to the best of our knowledge, have no precedent in the epidemics literature, and can be used to further investigate the effects of  malware visibility in networks. 

 An advantageous feature of all these models is that the malware designer only requires synchronized actions from a fixed number of nodes that are under its control from the outset. This decreases the risks of detection and policy implementation errors arising from coordinating synchronized distributed actions among a varying set of nodes. 
\vspace{-0.1in}
\subsection{Results}
We then derive necessary structures for optimal solutions for each of the cases, using Pontryagin's Maximum Principle and custom arguments constructed for each case (in \S \ref{sec:structural}). We show that the attacker's optimal strategy in all of these models is for the germinators to spread only one variant of the epidemic at each time: the germinators will create zombies up to a certain threshold time, and then only create passives (including by halting zombies) from then on. That is, the optimal controls are {\em bang-bang} (i.e., only taking their minimal and maximum values) with only one jump. Note that the controls can take any value between 0 and 1 at each point in time, and this bang-bang structure is one that emerges from the dynamics of the problem. These structural results are without precedent in the literature, both due to the uniqueness of the model, as well as the constraints placed on the {\em vector} of optimal controls.

It is interesting to note that in each of the variations we consider, our analysis reveals that all the controls in each model have the same threshold, a fact that is not at all clear {\em a priori}. Thus the entire control space can be described by one time threshold. This structure is invaluable for deriving the optimal controls computationally {(by solving the scalar optimization problem with the state ODEs mapping the variable to the damage objective)}. Furthermore, the controls are {deterministic and} easy to implement as the germinators need to be programmed with just one time instant for all of their controls. 

Finally, we investigate the performance of the derived optimal controls using numerical simulations (in \S \ref{sec:heuristics}). We first investigate the effect of the additional halting action on the optimal attack policies. We show that for both the simple and halting models, as the rate of contact between zombies and susceptibles increases, zombies are created for a shorter time period. We also show that the halting control adds to the length of time the zombie variant should optimally be propagated, with the additional propagation time depending on some system parameters. We then compare the optimal control with heuristics, and show that even without the halting control, the optimal solution performs $10\%$ better than the leading heuristic, with the performance differential being larger for more naive heuristics. We then consider errors in the implementation of the network defense strategy outlined in \S \ref{subsec:beta_model}, and investigate their effects on the malware spread. We show that erroneous estimations on the part of the defender only slightly  affect the damage inflicted by the attacker, which points towards the robustness of the attack policies to errors in estimations by the network defense. Finally, we quantify the effect of synchronization errors among the relatively small number of germinators on the efficacy of the malware attack. We show that any such attack is robust to small errors among the germinators, sounding an alarm to the fact that these malware attacks are less vulnerable to implementation issues that may arise from synchronization errors than previous generations of malware.
\vspace{-0.1in}
\section{Literature Review}\label{sec:Lit_Review}

Multiple interacting epidemics that spread among a single population have been considered in the fields of biology (e.g., multiple strains of a viral epidemic \cite{karrer2011competing, kendall1983epidemics}) and sociology (e.g., competition among memes in a world with limited attention span\cite{weng2012competition}). 
The {\em key} distinction between the control of biological epidemics \cite{hethcote1973optimal,morton1974optimal, wickwire1975optimal, wickwire1976optimal, behncke2000optimal} and that of malware ones is that in malware epidemics the {\em attacker} can also decide to {\em use her resources optimally} and to {\em adapt} to foresee the response of the defender. In the realm of sociology, the control of information epidemics offers closer parallels to that of malware. For example, Kandhway and Kuri \cite{kandhway2014run} model how an erroneous rumor may be optimally stifled by the spread of correct information, which is a secondary epidemic that interacts with the naturally occurring rumor epidemic. However, in this case {\em only one} of the epidemics can be controlled, while the malware attacker can possibly simultaneously control the spread of {\em all} malware variants. When there are multiple controllable epidemics, the resulting simultaneous controls are interdependent, and focusing on one control and characterizing its structure does not lead to a characterization of the optimal action. Thus, in malware epidemics there are {\em vectors} of controls available to the attacker, which requires new approaches and techniques compared to the other fields discussed.

Even within the majority of malware epidemic models, e.g., \cite{kim2004measurement, zou2002code,garetto2003modeling, lu2014botnets, hall2006thwarting,schramm2013lanchester, khouzani2011maximum}, the spread of {\em only one} malware has been examined, while we focus on the case where two variants are spreading in conjunction with each other. This presents a fundamentally different choice to the attacker, and so the model presented for the spread of visibility-heterogeneous malware variants has no precedent in literature. Accordingly, the questions we asked and the solutions we obtained are substantially different to prior work. 

{Note also that in nearly all malware epidemics, as well as the more generic epidemic models mentioned, some form of the homogeneous mixing assumption is used to obtain tractable results. While \cite{watkins2015deterministic} provides one interesting avenue for the relaxation of the mean-field assumption in the study of a given epidemic process, tractable results in the epidemic control domain still critically rely on the mean-field assumption.}

Nonetheless, we still distinguish other aspects of our work from those considering a single type of malware: in these papers: 1- it is assumed that the attacker's sole aim is to maximize the spread of the malware, which is no longer the case for the emerging class of surgical malware such as Regin\cite{Regin} and Stuxnet\cite{falliere2011w32} and {2- attackers have a mechanism to control the spread of the malware remotely in the future, e.g., through a timer in the code which would be executed in infected machines (as in \cite{eshghi2014optimal2}). Any such code would have to interact with the operating system of the infected node, the configuration of which might not be known to the attacker, and can thus create a point of failure for the malware.} The failure of such a mechanism of control was key to the overspread and subsequent remedy of Stuxnet \cite{nyt}.

 Among the work on the control of a single-type/variant of malware (and the closely related literature on the spread of a message in Delay Tolerant Networks \cite{eshghi2015optimal, singh2010delay} and the spread of a rumor\cite{kandhway2014run}), the closest work to this topic (in terms of approach and spreading models) was {in two papers \cite{khouzani2011maximum, khouzani2010dynamic}. In both papers, however, the authors
assume that the malware can control the transmission range of infected nodes\footnote{We assume that the control affects the mix of malware variants and that the communication ranges of nodes are outside the malware's control, perhaps even being controlled by the defender as a mitigation mechanism. Thus, the control and the trade-off to the malware designer is fundamentally different.} and {\em patching} is the major defense of the network and starts as soon as the epidemic spreads\footnote{This may not be the case for an emerging stealthy epidemic like Stuxnet that is very large and extremely hard to decipher, let alone mitigate\cite{zetter2011stuxnet, chen2011lessons}. In our model, the network only becomes aware of the malware as it becomes more visible (i.e., as the visible variant spreads).}. Thus, while the derived bang-bang structure of the optimal controls is similar, their models and their results apply to a fundamentally different class of malware, and the arguments used in deriving the results are only similar at the level of using a classic Maximum Principle-derived switching function argument for constrained controls. Furthermore, the adaptive defense model and the results on the simultaneity of 3 optimal control switching times for the halting model are without precedent in the literature.}

Finally, the very strict structure we prove for the vector of malware optimal control, which restricts the search space for computational methods to a single parameter, is also without precedent in any of the aforementioned literature.

\vspace{-0.1in}
\section{System Model and Objective Formulation}\label{sec:Sys_Model}

In this section we model the spread of malware in a homogeneous network with random contacts. This can be the case where malware spreads among mobile devices with proximity-based communication, or where random contacts in an address-book are utilized. The virus propagates in the network between times 0 and $T$. We represent the fraction of {\em susceptible}, {\em germinator}, {\em zombie}, and {\em passive} nodes at time $t$ with $S(t)$, $G(t)$, $Z(t)$, and $P(t)$ respectively, and assume that they are differentiable {functions of time}. We assume that for any pair of states, the statistics of meeting times between all pairs of nodes of those two states are identical and exponentially distributed, where the mean is equal to the {\em homogeneous mixing rate} of those two states.  Groenevelt {\em et al.}\cite{groenevelt2005message} have shown that homogeneous mixing holds under the common Random Way-point and Random Direction mobility models (when the communication range of the fast-moving nodes is small compared to the total region). {It has been shown \cite[p.1]{kurtz1970solutions, gast2010mean} that the resulting evolution of such a set of state fractions (where state transitions occur according to a Poisson contact process) will converge pathwise to the solution of a set of ordinary differential equations derived from the dynamics in the population limit (i.e., in the mean-field) on any limited time period (in particular, including the transient phase). In previous work, we have shown that such approximations are reasonable even with populations as small as 40-160 \cite{eshghi2015optimal}.\footnote{This work \cite{eshghi2015optimal} also lays out a roadmap on how to partially relax the homogeneous mixing assumption in the current problem.} }

Note that the zombies can be programmed to only spread the malware at a fraction of the times they meet susceptibles, slowing their spread, or they can be programmed to use resources that are not utilized by the rest of the network to spread faster. Therefore we take the mixing rate between Z and S to be potentially different from the other pairs of states. 

We describe the state dynamics of such systems as an epidemic for the cases where: 1) germinator agents can only interact with susceptible agents (\Sec\ref{subsec:simple_model}), 2) germinator agents can also interact with zombies as well (\Sec\ref{subsec:killing_model}), and 3) effective network contact rates are a function of the infection spread, mirroring the response of a network defender (\Sec\ref{subsec:beta_model}) (Figure \ref{fig:SIZP}). We state and prove a key observation about all these dynamics (\S \ref{subsec:observations}). We next formulate the aggregate damage of attack efficacy and the ensuing visibility (\Sec\ref{subsec:cost}). Finally, we lay out the optimization problem in \S \ref{subsec:statement}. 

\begin{figure}
\centering
 \begin{tikzpicture}[font=\LARGE]
 
\sirboxes
\sirvec
\ispoint

\end{tikzpicture}
 \caption{{The blocks represent the 4 states of nodes with regard to the malware. The solid black lines show the dynamics in \S \ref{subsec:simple_model} with the transition rates super-imposed. The green arrows point from each source of  malware to the resulting transition. The dotted red lines show the additional halting action in \S \ref{subsec:killing_model}. The model in \S \ref{subsec:beta_model} has the same dynamics as the solid black lines, but with $\beta$ being a function of $Z$ (i.e., $\beta(Z)$).}}
 \label{fig:SIZP}
\end{figure}
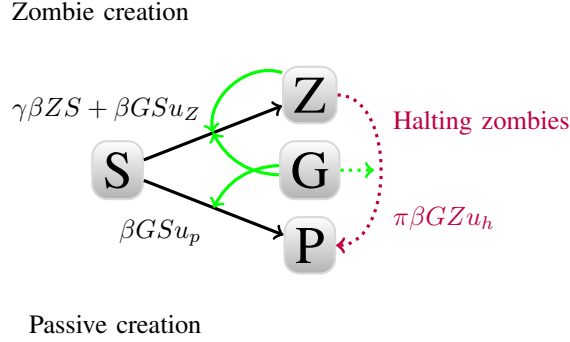

\vspace{-0.15in}
\subsection{SGZP Model with no halting}\label{subsec:simple_model}

 The attacker can spread the malware in two ways: 1- upon encountering a susceptible, she can, through the control variable $u_Z(t)$, turn that susceptible node into a zombie, {i.e., one that will henceforth propagate that infection to susceptibles it meets}. 2- upon encountering a susceptible, she can, through the control variable $u_P(t)$, turn that susceptible into a \emph{Passive}, $P$. These control variables  --- $(u_Z, u_P)\in \mathcal{U}$, where $\mathcal{U}$ is the set of piecewise continuous controls --- can be thought of as the {probabilities that an interaction of a germinator and a susceptible at time $t$ will lead to the susceptible becoming a zombie and a passive respectively. To maintain such a probabilistic intuition, we constrain their sum to be less than one.}
\vspace{-0.05in}
\begin{subequations}\label{model:nokilling}
\begin{align}
\dot{S} &=-\beta G S (u_P+ u_{Z}) -\gamma\beta Z S\\
\dot{Z} &= \beta G S u_{Z} + \gamma\beta Z S \label{eq:Z_nokilling} \\
\dot{P} &=\beta G S u_P
\end{align}
\end{subequations}
\vspace{-0.3in}
\begin{subequations}\label{nohalt_bounds}
\begin{align}
u_P+ u_{Z} \leq 1\label{sum_bound_nohalt}\\
0 \leq u_P \leq 1~~
0 \leq u_{Z}\leq 1\label{bound_nohalt}
\end{align}
\end{subequations}
Here, $\beta$ is the mixing rate between $S$ and $G$ {(which the attacker can calculate using time averages of contact times)}, and $\gamma\beta$ is the mixing rate between $Z$ and $S$ (with $\gamma > 0$). {Thus, $\gamma$ is the relative secondary rate of spread of the malware. We consider all values of the parameter $\gamma$, with an associated trade-off: if $\gamma$ is high, the zombies spread too fast and increase visibility, while if $\gamma$ is low, the malware does not spread to cause significant damage.}

\vspace{-0.15in}
\subsection{SGZP Model with halting}\label{subsec:killing_model}

This model is akin to the previous one, with one more mechanism added: germinator nodes (G) can force a zombie (Z) to become passive (P) through a process we will call {\emph ``halting''}. This happens through another control variable $u_h$, which, in keeping with the intuition, can be thought of as the probability of halting encountered zombies at each instant. Again, we take $(u_Z, u_P, u_h)\in \mathcal{U}'$, where $\mathcal{U}'$ is the set of piecewise continuous controls. The system dynamics become: 
\vspace{-0.15in}
\begin{subequations}\label{model:killing}
\begin{align}
\dot{S} &=-\beta G S (u_{P}+ u_{Z}) -\gamma\beta Z S\\
\dot{Z} &= \beta G S u_{Z} + \gamma\beta Z S {- \pi \beta G Z u_h} \label{eq:Z_killing} \\
\dot{P} &=\beta G S u_{P} {+ \pi \beta G Z u_h},
\end{align}
\end{subequations}
with $0<\pi \leq 1$ signifying the extent to which the zombies can be stopped when encountered by the original germinators. {This model is similar to the Daley-Kendall rumor model \cite{daley1965stochastic}, where repeated interaction with active agents can turn an active spreader of the rumor into an agent that is aware of the rumor, but has no interest in spreading it any further.} 
The constraints now become:
\vspace{-0.1in}
\begin{subequations}\label{halt_bounds}
\begin{align}
u_P+ u_{Z} \leq 1\label{sum_bound_halt}\\
0 \leq u_P \leq 1, ~
0 \leq u_{Z}\leq 1,~
0\leq u_h\leq 1.\label{bound_halt}
\end{align}
\end{subequations}

\vspace{-0.15in}
\subsection{SGZP Model with no halting and adaptive defense}\label{subsec:beta_model}

Instead of allowing a constant rate of interactions $\beta$, the network defender can choose the effective mixing rate $\beta$ to be a function of the fraction of zombies as her defense policy ($\beta(Z)$). In these policies, the network defender regulates the rate of contact between nodes based on the proportion of zombie nodes it has observed. {While the network cannot determine which nodes have been compromised,} it can determine the fraction of the network that has been infected by zombies by observing the chatter among nodes and the extra communications whose purpose is unknown, either in the whole network or among a representative subset of nodes. {If these illicit communications are significant enough to attract the network defender's attention, they can implement a quarantine defense policy, captured by $\beta(Z)$, which will be a function of likelihood the malware is detected, and which will decrease the spread of the malware.}

We consider the system dynamics described in the no-halting model, and adapt them accordingly:

\vspace{-0.15in}
\begin{subequations}\label{model:betaZ}
\begin{align}
\dot{S} &=-\beta(Z) G S (u_P+ u_{Z}) -\gamma\beta(Z)  Z S\\
\dot{Z} &= \beta(Z) G S u_{Z} + \gamma\beta(Z) Z S \\
\dot{P} &=\beta(Z) G S u_P
\end{align}
\end{subequations}

The controls available are also the same as those in \eqref{nohalt_bounds}. In particular, they are still assumed to be piecewise continuity.

We consider two classes of $\beta(Z)$ functions: 1) Affine functions, of the form $\beta(Z) = - a Z + \beta_{max}$ for $0\leq a\leq \beta_{max}$ (a natural assumption, as the contact rate cannot be negative). If $a=0$, the affine case simplifies to the constant $\beta$ case. 2) Exponential sigmoids, of the form $\beta_Z =  \dfrac{\beta_0}{1+e^{ \alpha(Z - Z_{th})}}$, with $0<Z_{th}<1$ being a fixed threshold and $\alpha>0$ denoting the sharpness of the cut-off. As $\alpha$ increases, $\beta(Z)$ can become arbitrarily close to $\beta(Z)=\beta_0\mathbf{1}_{Z\leq Z_{th}}$, an all-or-nothing policy.
\hide{In the limit, the policies could be no action at all ($\beta(Z)=\beta_0$), which we examine in \S\ref{sec:constant},  or an all or nothing policy ($\beta(Z)=\beta_0\mathbf{1}_{Z\leq Z'}$ for some $Z'$ \red{flesh out in footnote}), which we will approach through sigmoid functions in \S\ref{sec:sigmoid}. In between, we will look at the intermediate case of an affine $\beta(Z)$ in \S\ref{sec:affine}.
We assume that} Both of these classes satisfy $\beta(Z)>0$ for all $Z$ (i.e., the network never shuts down completely due to the infection) and $\frac{d\beta(Z)}{dZ}<0$ for all $Z$ (except for the trivial case of constant $\beta(Z)$), as more visibility should lead to more communication restrictions from the network. In mobile epidemics, this is equivalent to nodes decreasing their communication range upon the detection of an infection, e.g. as in \cite{khouzani2011dynamic}. In practice, the network will have an estimate $\hat{Z}$ of the fraction of zombies. Our simulations reveal that the sub-optimality induced by the estimation error is small (\S \ref{sec:heuristics}).\hide{ The exact characterization of a stochastically optimal control for this instance is an interesting problem for future work.}
\vspace{-0.15in}
\subsection{Key observations}\label{subsec:observations}

We start with a theorem that holds for all the models presented above, and which will be used as a building block to obtain structural results in \S \ref{sec:structural}.

\begin{Theorem}\label{thm:constraints}
For a system with the mechanics described in either \S \ref{subsec:simple_model}, \S \ref{subsec:killing_model}, or \S \ref{subsec:beta_model}, with initial conditions $S(0)=S_0>0$, $G(0)=G_0>0$, $Z(0)= Z_0\geq0$, and $P(0)= P_0\geq0$, and $S_0 + G_0+Z_0+P_0= 1$, and with piecewise continuous controls $u_P$, $u_Z$
(and in \eqref{model:killing}, $u_h$), the dynamical systems~\eqref{model:nokilling}, \eqref{model:killing}, and \eqref{model:betaZ} have unique state solutions $(S(t), G(t), Z(t), P(t))$, with $S(t)>0$, $Z(t)\geq0$, $P(t)\geq0$, and $(S+G+Z+P)(t)=1$ for all $t \in [0,T]$. 
\end{Theorem}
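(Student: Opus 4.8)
The plan is to treat all three systems uniformly by exploiting their shared algebraic structure, and then to establish existence/uniqueness, conservation, and the sign constraints in an order that avoids circular reasoning. First I would observe that none of the three systems carries an equation for $G$, so $G(t)=G_0$ is constant; and in each system a direct cancellation gives $\dot S + \dot Z + \dot P = 0$ almost everywhere: the $\gamma\beta Z S$ terms cancel between $\dot S$ and $\dot Z$, the $\beta G S u_Z$ and $\beta G S u_P$ terms in $\dot Z$ and $\dot P$ cancel the corresponding negative terms in $\dot S$, and in \eqref{model:killing} the halting terms $\pi\beta G Z u_h$ cancel between $\dot Z$ and $\dot P$. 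This immediately yields the conservation law once a solution is known to exist.

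For existence and uniqueness I would note that each right-hand side is a polynomial in $(S,G,Z,P)$ multiplied by the piecewise-continuous controls (and, in \eqref{model:betaZ}, composed with the locally Lipschitz map $\beta(\cdot)$; both the affine and the sigmoid choices are smooth). Hence on each open subinterval where the controls are continuous the vector field is locally Lipschitz in the state, and Picard--Lindel\"of gives a unique local solution; patching across the finitely many breakpoints of the controls, using the left-endpoint value as the next initial condition, produces a unique absolutely continuous solution on a maximal interval $[0,t^\ast)$.

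The sign constraints are the crux, and I would obtain them by writing each scalar equation as a linear ODE in that coordinate with a nonnegative forcing term, so that the variation-of-parameters representation forces the correct sign. The cleanest order is: (i) the $S$-equation reads $\dot S = -\bigl[\beta G(u_P+u_Z)+\gamma\beta Z\bigr]S$, a homogeneous linear equation, so $S(t)=S_0\exp\!\bigl(-\int_0^t[\cdots]\,ds\bigr)>0$ on all of $[0,t^\ast)$ regardless of the sign of the integrand, needing no a priori bound; (ii) the $Z$-equation has the form $\dot Z = a(t)Z + b(t)$ with $b=\beta G S u_Z\ge 0$ (using $S>0$ and controls $\ge 0$), so the integrating factor $\Phi(t)=\exp(\int_0^t a)>0$ gives $Z(t)=\Phi(t)\bigl[Z_0+\int_0^t\Phi(s)^{-1}b(s)\,ds\bigr]\ge 0$ since $Z_0\ge 0$, while the halting model only shifts $a$ by the bounded term $-\pi\beta G u_h$, leaving $b\ge 0$ unchanged; (iii) with $S>0$ and $Z\ge 0$ in hand, $\dot P=\beta G S u_P$ (resp. $+\,\pi\beta G Z u_h$) is nonnegative, so $P(t)\ge P_0\ge 0$.

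Finally, combining the conservation law with the sign constraints closes the argument and upgrades local to global existence: since $S,Z,P\ge 0$ and $S+Z+P=1-G_0$, all coordinates lie in $[0,1]$, so the solution cannot blow up in finite time and therefore extends to all of $[0,T]$, i.e. $t^\ast\ge T$. The main obstacle I anticipate is bookkeeping rather than depth: one must avoid circularity (establish $S>0$ before invoking it for $Z$ and $P$, and $Z\ge 0$ before the halting $P$-equation), handle the merely absolutely-continuous regularity at control breakpoints so the integrating-factor identities hold in the Carath\'eodory sense, and verify that the forcing terms remain nonnegative across all three models simultaneously.
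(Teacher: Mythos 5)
Your proposal is correct and follows essentially the same route as the paper's proof: conservation by summing the three equations, per-coordinate sign preservation exploiting the fact that each equation is linear in its own coordinate with nonnegative forcing, and standard Lipschitz/Picard--Lindel\"of theory for uniqueness (with $\beta$ replaced by the positive, Lipschitz $\beta(Z)$ in the adaptive-defense model). If anything, your execution is slightly tighter: the exact integrating-factor representations for $S$ and $Z$ avoid the mild circularity in the paper's differential-inequality step (where $\gamma\beta S Z \geq MZ$ tacitly presumes $Z\geq 0$), and your explicit local-to-global continuation via the a priori bound $S,Z,P\in[0,1]$ makes precise what the paper delegates to the cited existence theorem on $[0,T]\times\mathbf{S}$.
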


The assumptions $S_0>0$  and $G_0 >0$ are natural, otherwise there is no interaction to control. Henceforth, we will assume these, as well as $Z_0\geq 0$ and $P_0\geq 0$.

{\begin{proof}
The uniqueness follows from standard results in the theory of ordinary differential equations~\cite[Theorem A.8, p. 419]{seierstad1987optimal}  given the observation that the RHS of the dynamic systems is comprised of quadratic forms and is thus Lipschitz over $[0,T]\times\mathbf{S}$, where $\mathbf{S}$ is the set of states such that the boundary conditions hold.

We provide the proof for the case of \S \ref{subsec:simple_model}, and note the changes for \S \ref{subsec:killing_model}. First of all, $(\dot{S}+\dot{Z}+\dot{P})(t)=0$ and $(S+Z+P)(0)=1-G_0$, so $(S+G+Z+P)(t)=1$ for all $t$. We know that $\dot{S}= -\beta G S (u_P+ u_{Z}) -\gamma\beta Z S \geq -M S$, where $M$ is the upperbound of $\beta G+\gamma\beta Z$ (because $(u_P+ u_{Z})\leq 1$). Therefore, $S(t)\geq S_0 e^{-Mt}>0$ for all $t$. Therefore, $\dot{Z}=\beta G S u_{Z} + \gamma\beta Z S \geq \gamma\beta Z S \geq M Z$, where $M$ is a lowerbound on $\gamma\beta S$ which exists due to continuity (respectively, $\dot{Z}=\beta G S u_{Z} + \gamma\beta Z S - \pi \beta Z G u_h \geq  Z (\gamma\beta S - \beta \pi G u_h )\geq M' Z$, where $M'$ is a lowerbound on $(\gamma\beta S - \beta \pi G u_h )$ which again exists due to continuity). Note that the first inequality resulted from $u_Z(t)\geq 0$ for all $t$. Therefore, $Z(t)\geq Z_0 e^{Mt}\geq 0$ (respectively $Z(t)\geq Z_0 e^{M't}\geq 0$) for all $t$. Finally, $\dot{P}=\beta G S u_P\geq 0$ for all $t$ (respectively, $\dot{P}=\beta G S u_P+ \pi \beta Z G u_h \geq 0$ for all $t$), as $u_Z(t)\geq 0$, so $P_0\geq 0$ leads to $P(t)\geq 0$ for all $t$.

Theorem \ref{thm:constraints} can be proved very similarly for the model in \S \ref{subsec:beta_model} using the reasoning we used for the model in \S \ref{subsec:simple_model}, with the difference that in the arguments, $\beta$ is replaced by $\beta(Z)$, which is lower-bounded away from zero for positive $Z$.
\end{proof}
}

\subsection{Utility Function}\label{subsec:cost}

As we discussed,  the attacker tries to maximize attack efficacy while minimizing visibility. We capture efficacy as a function $f(\cdot)$ of the aggregate number of zombies ($Z$) and passives ($P$) at each time instant. Meanwhile, visibility is only a function of zombies that re-spread the malware, {as that is the only time the malware is detectable. Visibility increases the likelihood that the network defender detects the malware and takes defensive actions.
}
 This means that we can capture instantaneous visibility as a function $g(\cdot)$ of the number of zombies at that instant. {While the attacker cannot in general measure the malware's visibility, she can choose $g(\cdot)$ based on how detrimental detection would be for her purposes.} This formulation is comprehensive because the fixed number of germinators ($G$) both cause damage and are visible, and are implicitly a term that is added to the variable of both functions.  This leads to the following aggregate damage function that the attacker seeks to maximize:
\begin{align}\label{objective}
J=\int_0^T (f(Z(t)+P(t))-g(Z(t)))\,dt.
\end{align}
We have some natural assumptions on $f(.)$ and $g(.)$:
$f(0)=g(0)=0$, with $\frac{dg(Z)}{dZ}>0$ and $\frac{\partial f(Z+P)}{\partial Z}=\frac{\partial f(Z+P)}{\partial P}>0$.

We assume that $f(x)$ is {\bf concave}, which means that incremental damage does not increase as the number of infected agents increases [i.e., the pay-off per infected agent decreases].

{\em In~\Sec\ref{sec:Optimal_simple}:} We assume $g(x)$ is {\bf convex}.  This means that an increment in the zombies is costlier (results in more visibility) when the infection is already more visible. This could be the case when the network becomes more wary of the infection as it progresses and becomes more visible.

{\em In~\Sec\ref{sec:Optimal_killing}:} We simplify $g$ to be {\bf linear}$,
g(x)=k_g x$, $k_g>0$.

 {\em In~\Sec\ref{sec:Optimal_beta}:} We set $g(x)\equiv0$, as the effects of visibility have been built into the network dynamics through $\beta(Z)$. This leaves us with:
 \begin{align}\label{objective_betaZ}
J=\int_0^T f(Z(t)+P(t))\,dt.
\end{align}

\vspace{-0.1in}
\subsection{Problem statement}\label{subsec:statement}
{\em In \S \ref{sec:Optimal_simple} and \S \ref{sec:Optimal_beta}, the attacker seeks to choose controls $(u_Z,u_P)\in \mathcal{U}$ satisfying \eqref{nohalt_bounds} so as to maximize $J$ (respectively, \eqref{objective} and \eqref{objective_betaZ}), while in \S \ref{sec:Optimal_killing}, she seeks to maximize $J$  \eqref{objective} through a choice of $(u_Z,u_P, u_h)\in \mathcal{U'}$ that satisfies \eqref{halt_bounds}.}

\section{Structural Results}\label{sec:structural}
Using Pontryagin's Maximum Principle and custom arguments specific to each case, we obtain the {\em one-jump bang-bang} structure of the optimal controls for the various cases in~\Sec\ref{subsec:simple_model}, \Sec\ref{subsec:killing_model}, and \Sec\ref{subsec:beta_model}. We provide the proof for \Sec\ref{sec:Optimal_simple} in the main text (\Sec\ref{sec:Optimal_simple1}) and the ones for \Sec\ref{sec:Optimal_killing} and \Sec\ref{sec:Optimal_beta} in the appendices (\S Appendix \ref{sec:Optimal_killing1} and \S Appendix \ref{sec:Optimal_beta1} respectively).

\hide{\subsection{Existence of optimal solutions}\label{subsec:existence}
\begin{Theorem}
For each of the problems outlined in \S \ref{sec:Sys_Model}, there exist piecewise continuous controls $(u^*_Z, u^*_P) \in \mathcal{U}$ (respectively $(u^*_Z, u^*_P, u^*_h) \in \mathcal{U}'$) and corresponding (unique) state solutions $(S(t),G(t), Z(t), P(t))$ to the initial value
problems such that $(u^*_Z, u^*_P) \in \arg\min_{(u_Z, u_P)\in  \mathcal{U}} J$(respectively, $(u^*_Z, u^*_P, u^*_h) \in \arg\min_{(u_Z, u_P, u_h)\in  \mathcal{U}} J$).
\end{Theorem}

\begin{proof}
This is a direct result of Cesari's theorem\cite[p. 60]{fleming1975deterministic}.

\end{proof}}

Intuition is unclear in determining these structures: while intuitively creating zombies at the beginning of the time period allows the malware to benefit from their epidemic spread, it also penalizes the malware more because of its prolonged visibility. This is further complicated by the fact that the controls can take any value between 0 and 1, and thus it is possible for the attacker to have any mix of malware spread at each instance in time. The strict structures that arise from the analysis are counter-intuitive and interesting both theoretically and from an implementation standpoint.

\vspace{-0.2in}
\subsection{Results for the no halting model (proved in \S \ref{sec:Optimal_simple1})}\label{sec:Optimal_simple}

\begin{Theorem}\label{thm:simple}
Any optimal control in $\mathcal{U}$ will satisfy
\begin{align*}
u_P(t)=
\begin{cases}
0\quad t \in[0,t^*)\\
1\quad  t \in(t^*,T)
\end{cases} u_Z(t)=
\begin{cases}
1\quad t \in[0,t^*)\\
0\quad   t \in(t^*,T)
\end{cases}
\end{align*}
for some $t^*\in [0,T)$.
\end{Theorem}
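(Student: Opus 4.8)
The plan is to apply Pontryagin's Maximum Principle. Since the germinator fraction $G$ is constant, the state is $(S,Z,P)$ and I would form the Hamiltonian
\[
H = f(Z+P)-g(Z)+\lambda_S\dot S+\lambda_Z\dot Z+\lambda_P\dot P,
\]
with adjoint variables satisfying $\dot\lambda_X=-\partial H/\partial X$ and the transversality conditions $\lambda_S(T)=\lambda_Z(T)=\lambda_P(T)=0$ (free endpoint, no terminal payoff). The crucial feature is that $H$ is affine in $(u_Z,u_P)$: its control-dependent part is $u_Z\phi_Z+u_P\phi_P$ with switching functions $\phi_Z=\beta G S(\lambda_Z-\lambda_S)$ and $\phi_P=\beta G S(\lambda_P-\lambda_S)$. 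Because $\beta G S>0$ by Theorem~\ref{thm:constraints}, maximizing $H$ over the simplex $\{u_Z,u_P\ge0,\ u_Z+u_P\le1\}$ is a linear program whose maximizer sits at a vertex, so the optimal control is bang-bang and is completely determined by the ordering of $\lambda_S,\lambda_Z,\lambda_P$.

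This reduces the theorem to two facts. First, I would show that ``doing nothing'' is never optimal, i.e. $\lambda_P>\lambda_S$ on $[0,T)$, so that $\phi_P>0$ and the maximizer always lies on the face $u_Z+u_P=1$. The clean entry point is $\dot\lambda_P=-f'(Z+P)<0$ with $\lambda_P(T)=0$, giving $\lambda_P(t)=\int_t^T f'(Z+P)\,ds>0$; upgrading this to $\lambda_P>\lambda_S$ is a costate comparison argument using $f'>0$ and the terminal conditions. Granting this, the choice of variant is governed entirely by the sign of $c:=\lambda_Z-\lambda_P$: where $c>0$ one has $\phi_Z>\phi_P>0$ so $(u_Z,u_P)=(1,0)$, and where $c<0$ one has $\phi_P>\max(\phi_Z,0)$ so $(u_Z,u_P)=(0,1)$. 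Hence it only remains to prove that $c$ changes sign exactly once, from $+$ to $-$.

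For the sign change I would work from the identity $\dot c=g'(Z)-\gamma\beta S(\lambda_Z-\lambda_S)$ together with $c(T)=0$. Evaluating at $T$ gives $\dot c(T)=g'(Z(T))>0$ (the spreading term vanishes since $\lambda_Z(T)=\lambda_S(T)=0$), so $c<0$ in a left-neighborhood of $T$ and the threshold satisfies $t^*<T$. To exclude further oscillations I would argue at the \emph{interior} zeros of $c$: at such a point $\lambda_Z=\lambda_P$, whence $\lambda_Z-\lambda_S=\lambda_P-\lambda_S$ and $\dot c=g'(Z)-\gamma\beta S(\lambda_P-\lambda_S)$. If this derivative is negative at every interior zero, then $c$ can only cross downward, and since between two downward crossings there must be an intervening upward one, $c$ has at most one interior zero; combined with $c<0$ near $T$ this forces the pattern $c>0$ on $[0,t^*)$ and $c<0$ on $(t^*,T)$ (with $t^*=0$ in the degenerate all-passive case), which is exactly the claimed structure. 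Here the monotonicities that come for free help frame the argument: $\dot Z=\beta G S u_Z+\gamma\beta Z S\ge0$ makes $Z$ non-decreasing, so by convexity of $g$ the visibility marginal $g'(Z)$ grows over time, while $S$ is non-increasing — quantifying why late times favor passives.

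I expect the last step to be the main obstacle. The difficulty is that the costate system is coupled: $\dot c$ depends on $\lambda_Z-\lambda_S$ (equivalently on $\lambda_P-\lambda_S$ at the zeros), not on $c$ alone, so the required inequality $\gamma\beta S(\lambda_P-\lambda_S)>g'(Z)$ at interior zeros is not immediate and has no closed form. Moreover the tempting shortcut — that the switching function $\beta G S(\lambda_Z-\lambda_S)$ is globally monotone, which would make the integrand of $c(t)=\int_t^T\!\big(\gamma\beta S(\lambda_Z-\lambda_S)-g'(Z)\big)\,ds$ single-crossing — fails, since that function can dip negative near $T$ precisely when $g'>f'$. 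The resolution must therefore be \emph{local}, lower-bounding $\lambda_P-\lambda_S$ at the zeros using $\lambda_P(t)=\int_t^T f'(Z+P)\,ds$, the concavity of $f$, the convexity of $g$ along the monotone trajectory $Z$, and the monotonicity of $S$; one must also rule out a singular arc $c\equiv0$ on a subinterval by differentiating the stationarity relation. Once the sign of $\dot c$ at interior zeros is pinned down, the single $+\!\to\!-$ crossing follows and $t^*$ is identified as that crossing time.
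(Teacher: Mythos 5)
Your setup coincides with the paper's own (same Hamiltonian, same switching functions $\phi_Z,\phi_P$, and the same terminal-time analysis showing a passive phase must abut $T$), but the proposal has two genuine gaps. The first is structural: your reduction of the whole problem to the sign of $c=\lambda_Z-\lambda_P$ presupposes that $\phi_P>0$ on all of $[0,T)$ (``doing nothing is never optimal''), which you only gesture at as ``a costate comparison argument.'' That global claim is neither proved nor needed. During the zombie phase one has, with $u_P=0$, $u_Z=1$,
\begin{equation*}
\frac{d}{dt}(\lambda_P-\lambda_S) = -f'(Z+P) + \beta\,(G+\gamma Z)\,(\lambda_Z-\lambda_S),
\end{equation*}
whose sign is indeterminate when $\lambda_Z-\lambda_S$ is large, so nothing prevents $\phi_P$ from dipping negative there. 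The paper avoids this issue entirely: it never asserts $\phi_P>0$ globally, but instead works with maximal intervals on which $\max\{\phi_P,\phi_Z\}>0$ and identifies which of the two is the larger; the sign of the loser is irrelevant.

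The second gap is the decisive one: the single-crossing property of $c$ is precisely the step you leave open (``I expect the last step to be the main obstacle''), and the route you sketch for closing it --- a \emph{local} argument pinning down the sign of $\dot c$ at every interior zero --- is not the mechanism that works, and is not what the paper does. The paper's argument is a \emph{two-crossing comparison}. If $t^*>0$ is the last crossing, maximality of the terminal interval forces $\dot c(t^{*+})\le 0$, i.e.\ $\gamma\beta S(t^*)\,(\lambda_Z-\lambda_S)(t^*)\ge g'(Z(t^*))$ (this is \eqref{inequality_nohalt}); if there were an earlier crossing $t''>0$, the same crossing geometry forces the reverse inequality $\gamma\beta S(t'')\,(\lambda_Z-\lambda_S)(t'')\le g'(Z(t''))$ (this is \eqref{inequality2_nohalt}). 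Between the two crossings $u_Z=1$, and --- after first disposing of the cases $Z(t^*)=0$ and $g'(Z(t^*))\ge f'((Z+P)(t^*))$, which your sketch does not address and which is exactly where concavity of $f$ enters --- the quantity $\varphi_Z=\beta G S(\lambda_Z-\lambda_S)$ is strictly decreasing on $(t'',t^*)$, while $g'(Z)$ is non-decreasing there by convexity of $g$ and $\dot Z\ge 0$. Writing $h:=\gamma\beta S(\lambda_Z-\lambda_S)$, these four facts give $h(t'')>h(t^*)\ge g'(Z(t^*))\ge g'(Z(t''))\ge h(t'')$, a contradiction; hence no second crossing exists and $t''=0$. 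No one-sided sign of $\dot c$ at a generic zero is ever established, and none is available. As it stands, your proposal establishes the terminal passive phase and the local picture around one crossing, but not the theorem.
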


This result means that for any optimal control, there exists a time threshold $t^*$ such that prior to $t^*$, the germinators convert all the susceptibles they encounter to zombies, and subsequent to it they convert the susceptibles to passives.
 
The fact that creating zombies starts from the initial time for all interactions, that passives are created for a time period leading up to the terminal time for all interactions, and that the switch between creating zombies and passives is instantaneous -- with no gap between, and no over-lap in, the intervals in which these variants are propagated, as well as no intermediate propagation rates --  is not at all {\em a priori} obvious.

Note that we prove a {\em necessary} condition for any optimal control, thus reducing the search space of controls from a vector of functions to a scalar ($t^*$). This is a cause for concern, as the latter is much more computationally tractable for the attacker, and shows that any optimal policy will also be simple for the attacker to execute. {The attacker can execute the optimal policy by optimizing the ODE \eqref{model:nokilling}, just varying the scalar parameter $t^*$, and then coding $t^*$ into the germinators, which are the only nodes that execute the control.}

\vspace{-0.2in}
\subsection{Results for the halting model (proved in \S Appendix \ref{sec:Optimal_killing1})}\label{sec:Optimal_killing} 

\begin{Theorem}\label{thm:killing}
Any optimal control in $\mathcal{U'}$ will satisfy
\begin{align*}
u_P(t)=u_h(t)=
\begin{cases}
0~ t \in[0,t^*)\\
1~  t \in(t^*,T)
\end{cases} \hspace{-0.1in}u_Z(t)=
\begin{cases}
1~ t \in[0,t^*)\\
0~   t \in(t^*,T)
\end{cases}
\end{align*}
for some $t^*\in [0,T)$, except in the case where $Z(t)=0$ for all $t\in [0,T]$, in which case $u_h$ can be arbitrary with the other two structures holding.
\end{Theorem}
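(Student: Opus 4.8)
The plan is to apply Pontryagin's Maximum Principle (PMP) to the dynamics \eqref{model:killing} with objective \eqref{objective} and the linear choice $g(x)=k_g x$. Since the germinator fraction $G\equiv G_0$ is constant, the state is $(S,Z,P)$; I introduce costates $(\lambda_S,\lambda_Z,\lambda_P)$ and the Hamiltonian
\begin{align*}
\ham = f(Z+P)-k_g Z+\lambda_S\dot S+\lambda_Z\dot Z+\lambda_P\dot P,
\end{align*}
which is affine in each of $u_Z,u_P,u_h$. Collecting coefficients, $u_Z$ multiplies $\phi_Z=\beta G S(\lambda_Z-\lambda_S)$, $u_P$ multiplies $\phi_P=\beta G S(\lambda_P-\lambda_S)$, and $u_h$ multiplies $\phi_h=\pi\beta G Z(\lambda_P-\lambda_Z)$. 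Because $S(t)>0$ by Theorem~\ref{thm:constraints} and $\beta,G_0>0$, maximizing $\ham$ over $\{u_Z,u_P\ge0,\ u_Z+u_P\le1\}\times[0,1]$ is decided entirely by the orderings of $\lambda_S,\lambda_Z,\lambda_P$. The decisive observation, and the reason all three controls share one threshold, is that the single quantity $\psi:=\lambda_Z-\lambda_P$ governs all of them: when $\psi>0$ the optimizer sets $u_Z=1,\,u_P=0$ and (since $\phi_h<0$) $u_h=0$, whereas when $\psi<0$ it sets $u_P=1,\,u_Z=0$ and (since $\phi_h>0$, provided $Z>0$) $u_h=1$.

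Next I use the transversality conditions $\lambda_S(T)=\lambda_Z(T)=\lambda_P(T)=0$. From $\dot\lambda_P=-f'(Z+P)<0$ one gets $\lambda_P(t)=\int_t^T f'(Z+P)\,d\tau>0$ on $[0,T)$, and a direct computation gives $\dot\psi=k_g-\gamma\beta S(\lambda_Z-\lambda_S)+\pi\beta G u_h\,\psi$. Hence $\psi(T)=0$ and $\dot\psi(T)=k_g>0$, so $\psi<0$ on an interval just below $T$: the system optimally ends in the passive/halting phase, as claimed. I also record that to exclude the ``leave as susceptible'' vertex $u_Z=u_P=0$ it suffices to show $\lambda_P>\lambda_S$ on $[0,T)$, since then $\phi_P>0$ always makes the passive (or zombie) vertex strictly preferred to inaction.

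The heart of the proof is then to show that $\psi$ has at most one sign change on $(0,T)$, necessarily from $+$ to $-$; the unifying observation above immediately converts this into the stated simultaneous one-jump structure with $t^*$ the unique zero of $\psi$ (and $t^*=0$ if $\psi\le0$ throughout). I would prove single-crossing by showing that at every interior zero $t_0$ of $\psi$ one has $\dot\psi(t_0)<0$: at such a point $\lambda_Z=\lambda_P$, the halting term drops out, and $\dot\psi(t_0)=k_g-\gamma\beta S(t_0)\big(\lambda_P(t_0)-\lambda_S(t_0)\big)$. Two consecutive interior zeros would each have to be approached from below on the right and from above on the left, contradicting continuity; this forces uniqueness, and combined with $\psi<0$ near $T$ yields the phase structure.

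The main obstacle is precisely the sign lemma $\gamma\beta S(\lambda_P-\lambda_S)>k_g$ at a zero of $\psi$, i.e.\ a quantitative lower bound on the costate gap $\lambda_P-\lambda_S$. I would obtain it by analyzing the costate system in the final passive phase, where the controls are constant ($u_Z=0,u_P=1,u_h=1$) so the adjoint equations are linear; integrating $\frac{d}{dt}(\lambda_P-\lambda_S)=-f'(Z+P)+\beta G(\lambda_P-\lambda_S)+\gamma\beta Z(\lambda_Z-\lambda_S)$ backward from the zero terminal value, and using both $\lambda_P>\lambda_S$ and the monotonicity of $f'(Z+P)$ implied by the concavity of $f$ (since $Z+P$ is nondecreasing), to argue the gap has grown past $k_g/(\gamma\beta S)$ by the time $\psi$ first returns to zero. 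Finally, the degenerate exception is immediate: if $Z(t)\equiv0$ then necessarily $u_Z\equiv0$, the halting term $\pi\beta G Z u_h$ vanishes identically and $u_h$ affects neither \eqref{model:killing} nor \eqref{objective}, so it is arbitrary while $u_P$ retains its structure --- exactly the stated exception.
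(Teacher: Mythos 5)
Your skeleton is the same as the paper's: PMP with the same costates \eqref{costates}, the same switching quantities \eqref{phis}, and your ``decisive observation'' that a single quantity $\psi=\lambda_Z-\lambda_P$ governs all three controls is exactly the paper's identity $\varphi_h=\pi\frac{Z}{S}(\varphi_P-\varphi_Z)$ in \eqref{phicondition} (your $\psi$ is $(\varphi_Z-\varphi_P)/(\beta G S)$). Your computation of $\dot\psi$, the terminal behavior $\psi(T)=0$, $\dot\psi(T^-)=k_g>0$, and the treatment of the degenerate case $Z\equiv 0$ are all correct. However, two load-bearing steps are missing, and the sketch you give for the second would not close. First, reducing the Hamiltonian maximization to the sign of $\psi$ alone requires excluding the inaction vertex $u_Z=u_P=0$, i.e.\ showing $\max\{\lambda_Z,\lambda_P\}>\lambda_S$ on $[0,T)$; you state that $\lambda_P>\lambda_S$ ``suffices'' but never prove it, and it does not follow from your observation $\lambda_P>0$, since $\lambda_S$ has no a priori sign. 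The paper never needs this global fact: it keeps all sign cases \eqref{maxi} and the sum condition \eqref{sum_halting} in play and establishes signs only interval by interval, working backward from $T$.

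Second, and more seriously, your single-crossing lemma --- $\dot\psi(t_0)<0$ at \emph{every} interior zero of $\psi$ --- is precisely the heart of the theorem and remains unproven. Three concrete problems: (a) \emph{strictness}: at the last zero $t^*$ (boundary of the final passive phase), the definition of that phase only forces the weak corner inequality $\dot\psi(t^{*+})\le 0$, which is the paper's \eqref{eq:difference2}--\eqref{inequality_halt}; nothing in your argument upgrades this to the strict inequality your crossing argument needs. (b) \emph{scope}: your proposed derivation of the sign lemma integrates the adjoint system through the final passive phase, so at best it speaks to the zero adjacent to that phase; an earlier zero $t''$ is bordered on the right by a \emph{zombie} phase ($u_Z=1$, $u_P=u_h=0$), where that computation says nothing. (c) \emph{missing case split}: the mechanism that actually kills earlier zeros is the paper's combination of (i) ruling out $k_g\ge f'((Z+P)(t^*))$ (otherwise $\varphi_Z(t^*)\le 0$, contradicting $\varphi_Z(t^*)=\varphi_P(t^*)>0$), (ii) using $k_g<f'$ together with concavity of $f$ and $\dot Z>0$ to get $\dot\varphi_Z<0$ on the zombie phase via \eqref{eq:killing_Z2}, hence $\varphi_Z(t'')>\varphi_Z(t^*)\ge Gk_g/\gamma$, and (iii) contradicting the reversed corner inequality $\varphi_Z(t'')\le Gk_g/\gamma$ of \eqref{inequality2_halt} at $t''$. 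You have the concavity ingredient in hand, but you apply it to the wrong quantity ($\lambda_P-\lambda_S$ in the passive phase, rather than $\varphi_Z$ in the zombie phase) and never perform the case split (i), so the contradiction cannot be assembled from what you wrote. A minor additional point: to conclude $u_h=1$ on $(t^*,T)$ you also need $Z>0$ there, which requires the paper's argument that $Z(T)>0$ forces $Z(t)>0$ on all of $[t^*,T]$.
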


This means that there exists a time threshold $t^*$ such that prior to $t^*$, the germinators again convert all the susceptibles they encounter to zombies while not halting any zombies they meet, and subsequent to it they convert both the susceptibles and zombies they encounter to passives. Here, the added halting control can be used to slow the spread of zombies. 

The fact that the same result as Theorem \ref{thm:simple} holds for $u_Z$ and $u_P$ in the presence of $u_h$ is not clear {\em a priori}. Furthermore, the fact that the halting optimal control is bang-bang and that the switching time is the same as the other controls is surprising.

\vspace{-0.15in}
\subsection{Results for the adaptive defense model}\label{sec:Optimal_beta}

{\textit{Theorem \ref{thm:simple} holds (with the difference that $t^*\in [0,T]$) for constant, affine, and sigmoid $\beta(Z)$.
}} This is remarkable given that here, $\beta$ changes as a function of $Z$. This result is proved in \S Appendix \ref{sec:Optimal_beta1}.

\vspace{-0.1in}
\subsection{Proof of Theorem \ref{thm:simple} for the no halting model}\label{sec:Optimal_simple1}
\begin{proof}
 This proof utilizes the necessary conditions for an optimal control derived from Pontryagin's maximum principle. In particular, we explicitly characterize the optimal controls as functions of the optimal states and {\em co-states} (akin to Lagrange multipliers). Subsequently, we start at terminal time, where the co-states are known, and follow their evolution backward in time till we arrive at the initial time, thereby implicitly characterizing the necessary structure of the optimal controls.

  Define continuous co-states $(\lambda_S, \lambda_P, \lambda_Z, \lambda_0)$ such that at points of continuity of the controls: 
  \begin{align}\label{costates_nohalt}
\dot{\lambda}_S &= \beta[(\lambda_S-\lambda_P)  G u_P+ (\lambda_S-\lambda_Z)  G u_Z + (\lambda_S-\lambda_Z) \gamma Z]\notag\\
\dot{\lambda}_Z &= -f'(Z+P) + g'(Z)+(\lambda_S-\lambda_Z) \gamma\beta S\notag\\
\dot{\lambda}_P &=  -f'(Z+P),
\end{align}
with final co-state constraints:
\begin{align}\label{terminal_nohalt}
\lambda_S(T) = \lambda_Z(T) = \lambda_P(T) = 0.
\end{align}

Towards characterizing properties of optimal solutions, we define the {\em Hamiltonian} as:
\begin{align}\label{Hamiltonian_nohalt}
\ham (t) := \lambda_0(f(Z+P)-g(Z)) +(\lambda_P-\lambda_S) \beta G S u_P\notag \\+ (\lambda_Z-\lambda_S) \beta G S u_Z +
(\lambda_Z-\lambda_S) \gamma\beta Z S.
\end{align}
Pontryagin's Maximum Principle \cite[p.182]{seierstad1987optimal} states that any optimal control vector $u^*$ must satisfy the following necessary conditions:
\begin{flalign}
 &(\lambda_S, \lambda_P, \lambda_Z, \lambda_0)\neq \vec{0}, ~~~ \lambda_0 \in \{0,1\}\label{vec_neq_zero},
\\& \forall_{u \in \mathcal{U}, t\in[0,T]}~\ham({ S}^*, { Z}^*, { P}^*, u^*, \lambda_S(t), \lambda_P(t), \lambda_Z(t), \lambda_0, t) \geq\notag\\ &\ham({ S}^*, { Z}^*, { P}^*, u, \lambda_S(t), \lambda_P(t), \lambda_Z(t), \lambda_0, t)\label{maximization}.
\end{flalign}

But if $\lambda_0=0$, $(\lambda_S(T), \lambda_P(T), \lambda_Z(T), \lambda_0)= \vec{0}$, a contradiction, so $\lambda_0=1$.

\subsubsection{Structure of the optimal control}
If we define:
\begin{subequations}\label{phis_nohalt}
\begin{align}
\varphi_P = (\lambda_P-\lambda_S) \beta G S \label{phip_nohalt}\\
\varphi_Z = (\lambda_Z-\lambda_S) \beta G S\label{phiz_nohalt},
\end{align}
\end{subequations}
then, the Hamiltonian becomes:
\begin{align}
\ham (t) = f(Z+P)-g(Z) + \varphi_P u_P + \varphi_Z u_Z\notag\\ +
(\lambda_Z-\lambda_S) \gamma\beta Z S.
\end{align}

The maximization of the Hamiltonian \eqref{maximization}, added to the sum constraints for the controls \eqref{sum_bound_nohalt}, leads to the following optimality conditions for the controls:\footnote{The question marks (?) denote singular controls. These can occur when the coefficient of a control variable in the {\em augmented Hamiltonian} (which includes the constraints) is zero over an interval, and thus the control has no effect on the Hamiltonian maximizing condition of the PMP.}
\begin{subnumcases}{(u_P, u_Z)=\label{maxi_nohalt}}
(0,0)\qquad \varphi_P<0 ,~~\varphi_Z<0\\
(1,0)\qquad \varphi_P>0 ,~~\varphi_P > \varphi_Z \label{eq:b}\\
(0,1)\qquad \varphi_Z>0 ,~~\varphi_Z > \varphi_P\label{eq:c}\\
(?,?)\qquad \varphi_Z = \varphi_P\geq 0 \\
(?,0)\qquad \varphi_P = 0,~~\varphi_Z < 0 \\
(0,?)\qquad \varphi_Z = 0,~~\varphi_P < 0
\end{subnumcases}

From \eqref{phis_nohalt} and the state \eqref{model:nokilling} and costate \eqref{costates_nohalt} evolution equations and after some manipulations, we have:\footnote{$g'(Z):=\frac{dg(Z)}{dZ}, \quad f'(Z+P):=\frac{\partial f(Z+P)}{\partial Z}=\frac{\partial f(Z+P)}{\partial P}$}
\begin{subequations}\label{summary_nohalt}
\begin{align}
\dot{\varphi}_P &=\beta [G u_Z (\varphi_Z-\varphi_P)  + \gamma Z (\varphi_Z-\varphi_P)- G S f'(Z+P)] \notag\\
\dot{\varphi}_Z &=\beta [G S (g'(Z)-f'(Z+P)) \notag\\&\quad+ G u_P  (\varphi_P-\varphi_Z)- \gamma S \varphi_Z]\label{summary_nohalt_Z}\\
\dot{\varphi}_P& - \dot{\varphi}_Z ={- (\varphi_P-\varphi_Z) (\beta G u_Z + \gamma\beta Z + \beta G u_P)}\notag\\&\quad~~\qquad{- \beta G S g'(Z)+ \gamma\beta S \varphi_Z}\label{summary_nohalt_ZP}
,
\end{align}
\end{subequations}

{\subsubsection{Proof methodology outline}\label{subsec:methodology}
From here on, we will use the necessary optimality conditions to obtain timing conditions for phase transitions among the conditions in \eqref{maxi_nohalt}.
We show that a time $t^*$ exists such that, for $t\in (t^*,T)$, we have $u_P(t)=1$ and $u_Z(t)=0$ (\S \ref{subsec:T}). If $t^*=0$, we have finished characterizing optimal controls. If not (i.e., $t^*>0$), we prove that a time $t''$ exists such that for $t\in (t'',t^*)$, we have $u_P(t)=0$ and $u_Z(t)=1$ (in \S \ref{subsec:tstar}). Finally, we show that $t''$ must be equal to zero (in \S \ref{subsec:tdouble}), leading to all possible optimal controls agreeing with the structure laid out in Theorem \ref{thm:simple}. }

\subsubsection{Time interval leading up to $T$ and the existence of $t^*$}\label{subsec:T}
{We now follow the evolution of $\varphi_Z$ and $\varphi_P$ for a time interval leading to $T$ in order to characterize necessary conditions for the optimal controls and to prove the existence of $t^*$.} From the terminal time costate conditions \eqref{terminal_nohalt}:
\vspace{-0.1in}
\begin{align*}
&\varphi_P (T) = \varphi_Z (T) = 0,\\
&\dot{\varphi}_P (T^-)= -f'((Z+P)(T^-)) \beta G S(T^-) <0,\\
&\dot{\varphi}_P (T^-)- \dot{\varphi}_Z (T^-)= - \beta GS(T^-) g'(Z(T^-))<0.
\end{align*}
Therefore, {\bf  ${\varphi}_P (t) > max\{{\varphi}_Z (t), 0\}$} for some interval leading up to T due to the continuity of the states and costates and using the definition of a left derivative. Let $(t^*,T)$ be the largest interval over which  this holds for $t\in(t^*,T)$ for some $t^*<T$, leading to the fact that for all such $t$, $u_P(t)=1$ and $u_Z(t)=0$ due to \eqref{eq:b}.

For $t \in (t^*,T)$, \eqref{summary_nohalt} becomes:
\begin{subequations}\label{summary_nohalt_1}
\begin{align}
\dot{\varphi}_P &=-\beta G S f'(Z+P)+ \gamma\beta Z (\varphi_Z-\varphi_P)\label{summary_nohalt_1_P} \\
\dot{\varphi}_Z &=\beta G S (g'(Z)-f'(Z+P)) +\beta G (\varphi_P-\varphi_Z)- \gamma\beta S \varphi_Z\label{summary_nohalt_1_Z}\\
\dot{\varphi}_P& - \dot{\varphi}_Z ={ \gamma\beta S \varphi_Z- (\varphi_P-\varphi_Z) ( \gamma\beta Z + \beta G)}{- \beta G S g'(Z)}
.
\end{align}
\end{subequations}

Recall that $\varphi_P(t)>0$ for $t \in (t^*,T)$, so due to continuity, we either have  $\varphi_P(t^*)>0$ or $\varphi_P(t^*)=0$. We now rule out $\varphi_P(t^*)=0$. If $\varphi_P(t^*)=0$, Rolle's Mean Value Theorem \cite[p. 215]{stewart2006calculus} applies over the interval $(t^*,T)$: as  
$\varphi_P(t^*)=\varphi_P(T)=0$ and $\varphi_P$ is continuous and differentiable over this interval, there must exist $\tau\in (t^*,T)$ such that $\dot{\varphi}_P(\tau)=0$. However, from \eqref{summary_nohalt_1_P}, it can be seen that $\dot{\varphi}_P(t)<0$ for $t \in (t^*,T)$, a contradiction. Therefore, $\varphi_P(t^*)>0$.

 Thus, either $t^*=0$ or $\varphi_Z(t^*)=\varphi_P(t^*)$. If $t^*=0$, due to \eqref{eq:b}, we have $u_P(t)=1$ and $u_Z(t)=0$ for all $t$ which agrees with the structure in Theorem \ref{thm:simple}, so henceforth we focus on {\bf the case where $\varphi_Z(t^*)=\varphi_P(t^*)>0$}.

First, we derive a property that will prove useful later on. We have $\dot{Z}(t)\geq 0$ from \eqref{eq:Z_nokilling} and Theorem \ref{thm:constraints}, and thus due to the convexity of $g(\cdot)$ for $t<t^*$: 
\vspace{-0.05in}
\begin{align}\label{inequality_g}
\dfrac{G g'(Z(t^*))}{\gamma}\geq\dfrac{G g'(Z(t))}{\gamma}. 
\end{align}

\vspace{-0.05in}
Next, $Z(t^*)$ can either be equal to zero or strictly positive. We first show that if $Z(t^*)=0$, the structure holds. 

{{\bf If $Z(t^*)=0$}, we have $\dot{Z}= \gamma\beta S Z$
for $t\in(t^*,T)$ as $u_Z(t)=0$ in this interval. Consider $M_1>0$ to be an upper-bound on the continuous $\gamma\beta S$
in this interval, so we must have $Z(t)\leq Z(t^*)e^{M_1(t- t^*)}=0$, and therefore $Z(T)=0$ due to continuity and the uniqueness of solutions of first-order initial value problems.  Thus, as $\dot{Z}\geq 0$ for $t\in (0,T)$, we must have $\dot{Z}=0$ over this interval, which from \eqref{eq:Z_nokilling} and Theorem \ref{thm:constraints} leads to $u_Z(t)=0$ for $t\in(0,T)$ and $Z_0=0$. This also means that from \eqref{summary_nohalt_1_P}, $\dot{\varphi}_P (t) =-\beta G S f'(Z+P)<0$ in this interval, leading to $\varphi_P(t)> \varphi_P(T)=0$, and from \eqref{maxi_nohalt}, to $u_P(t)=1$ over this interval. Thus, again $t^*=0$, agreeing with the structure predicted by Theorem \ref{thm:simple}. So from now on we will {\bf consider $Z(t^*)>0$}.
}
\hide{$\dot{Z}= \beta G S u_Z + \gamma\beta Z S \geq  M_1 Z$ on $(0, t^*)$, where $M_1>0$ is an upper-bound on the continuous $\gamma\beta S$, so we must have $Z(t)\leq Z(t^*)e^{M_1(t- t^*)}=0$ due to an integral argument and thus $\dot{Z}=0$ over this interval, which from \eqref{eq:Z_nokilling} and Theorem \ref{thm:constraints} leads to $u_Z(t)=0$ for $t\in(0,t^*)$ and $Z_0=0$. This also means that from \eqref{summary_nohalt_1_P}, $\dot{\varphi}_P (t) =-\beta G S f'(Z+P)<0$ in this interval, leading to $\varphi_P(t)\geq \varphi_P(t^*)>0$ and therefore $u_P(t)=1$ over this interval, agreeing with the structure predicted by Theorem \ref{thm:simple}. Thus from now on we will {\bf consider $Z(t^*)>0$}.}

Now, we examine $ g'(Z(t^*))- f'((Z+P)(t^*))$, noting that it can either be positive or strictly negative, and investigate both cases in turn.

{\bf If $ g'(Z(t^*))- f'((Z+P)(t^*))\geq0$}, then $g'(Z(t))- f'((Z+P)(t))\geq0$ for all $t \in (t^*,T)$. This is because from \eqref{model:nokilling}, $\dot{P}(t)+\dot{Z}(t)\geq 0$ and $\dot{Z}(t)\geq 0$ over this interval, which coupled with the convexity of $g(\cdot)$ and $-f(\cdot)$ in their arguments gives the aforementioned result. From \eqref{summary_nohalt_1_Z} and the definition of $t^*$, $\dot{\varphi}_Z > - \gamma\beta S \varphi_Z \geq -M_2 \varphi_Z$ in this interval, with $M_2>0$ being an upper-bound on $\gamma\beta S$. Therefore, $\varphi_Z(t^*)\leq\varphi_Z(T)e^{-M_2(t^*-T)}=0$ due to an integral argument, which means that $\varphi_P (t^*) > 0 \geq {\varphi}_Z (t^*)$. Note that this would contradict the starting assumption of this segment, which was  $\varphi_P (t^*) = {\varphi}_Z (t^*)$

Therefore, from here on we will {\bf examine the case of $g'(Z(t^*))< f'((Z+P)(t^*))$}. 

\subsubsection{Time interval leading up to $t^*>0$ and the existence of $t''$}\label{subsec:tstar}
{We now look at the evolution of $\varphi_Z$ and $\varphi_P$ for a time interval leading to $t^*>0$, and show that $t''$ exists such that t for $t\in (t'',t^*)$, we have $u_P(t)=0$ and $u_Z(t)=1$. Furthermore, in these cases we showed $\varphi_Z(t^*)=\varphi_P(t^*)$, $Z(t^*)>0$, and $g'(Z(t^*))< f'((Z+P)(t^*))$.}
At such a point $t^*$, from \eqref{summary_nohalt_Z} and the continuity of the states and co-states:
\small
\begin{align}\label{eq:difference1}
(\dot{\varphi}_P (t^{*+})- \dot{\varphi}_Z(t^{*+}))=\beta S(t^{*}) [\gamma\varphi_Z(t^{*})- G g'(Z(t^*))].
\end{align}\normalsize
Now, \eqref{eq:difference1} should be positive, because if this derivative was strictly negative, the definition of the right-derivative would show that $\varphi_Z(t)>\varphi_P(t)$ for $t$ in an interval starting from $t^*$, a contradiction. Because from Theorem \ref{thm:constraints}, $S(t^*)>0$, so $ \beta S(t^{*}) [\gamma\varphi_Z(t^{*})- G g'(Z(t^*))]\geq0$ and:
\begin{align}\label{inequality_nohalt}
\varphi_Z(t^*) \geq \frac{ G g'(Z(t^*))}{\gamma}.
\end{align}
Now, we can see from a continuity argument on \eqref{summary_nohalt_Z} (given that $\varphi_Z(t^*)=\varphi_P(t^*)>0$) that $\dot{\varphi}_Z (t^{*-})<0$. Thus $\varphi_Z(t)>\varphi_Z(t^*)$ for some interval leading up to $t^*$ due to the definition of a left-derivative. 

From \eqref{summary_nohalt_Z}, \eqref{inequality_g}, and \eqref{inequality_nohalt}, we must have:
$\varphi_Z(t) > \dfrac{G g'(Z(t))}{\gamma} $ for $t$ in some interval leading up to $t^*$. Let $(t',t^*)$ be the maximal such interval. In this interval, from \eqref{summary_nohalt_ZP}, $\dot{\varphi}_P - \dot{\varphi}_Z >- (\varphi_P-\varphi_Z) ( \gamma\beta Z + \beta G) \geq - M_3 (\varphi_P-\varphi_Z)$, where $M_3>0$ is an upper-bound on the continuous expression $\gamma\beta Z + \beta G$. So for any $t$ in this interval, $(\varphi_P(t)-\varphi_Z(t))<(\varphi_P(t^*)-\varphi_Z(t^*))e^{-M_3(t-t^*)}=0$. Thus, ${\varphi}_P (t)<{\varphi}_Z (t)$ for $t \in (t',t^*)$. As $\varphi_Z(t^*)>0$, due to the continuity of the states and co-states, there exists a maximal interval $(t'',t^*)$ such that ${\varphi}_Z(t)>\max\{{\varphi}_P (t),0\}$. Following from \eqref{eq:c}, for $t\in(t'',t^*)$ we must have $u_P(t)=0$ and $u_Z(t)=1$.

\subsubsection{Proof that $t''=0$}\label{subsec:tdouble}
 If $t'' =0$, the above concludes our specification of the structure, which agrees with Theorem \ref{thm:simple}. Thus, henceforth we assume $t''>0$, and thus {\bf either $\varphi_Z(t'')={\varphi}_P (t'')$ or ${\varphi}_Z(t'')=0$}.

For $t \in (t{''}, t^*)$, \eqref{summary_nohalt} becomes:
\begin{subequations}
\begin{align}
\dot{\varphi}_P &=\beta [ -G S f'(Z+P)+ G  (\varphi_Z-\varphi_P)  + \gamma Z (\varphi_Z-\varphi_P)] \\
\dot{\varphi}_Z &=\beta [ G S (g'(Z)-f'(Z+P)) - \gamma S \varphi_Z]\label{eq:simple_Z2}\\
\dot{\varphi}_P& - \dot{\varphi}_Z =\beta [{ \gamma S \varphi_Z- (\varphi_P-\varphi_Z) ( G + \gamma Z)- G S g'(Z)}]
,
\end{align}
\end{subequations}
Now, for $t\in(t'',t^*)$, $g'(Z(t))-f'((Z+P)(t)) < g'(Z(t^*))-f'((Z+P)(t^*))<0$. This is because $\dot{Z}(t)>0$ as $u_Z(t)=1$, and $\dot{P}(t)=0$ as $u_P(t)=0$, so $g(\cdot)-f(\cdot)$ is convex in the strictly increasing $Z$ in this interval. So from \eqref{eq:simple_Z2}, $\dot{\varphi}_Z <- \gamma\beta S \varphi_Z\leq -M_4 \varphi_Z$ with $M_4>0$ being the upper-bound of the continuous $\gamma\beta S$, and therefore for all $t\in (t'',t^*)$, 
$\varphi_Z(t)\geq\varphi_Z(t^*)e^{-M_4(t-t^*)}$, and therefore by continuity, $\varphi_Z(t'')\geq\varphi_Z(t^*)e^{-M_4(t''-t^*)}$. Thus, we can conclude that $\varphi_Z(t{''})>0$, as $\varphi_Z(t^*)>0$.

 So for $t''>0$, {\bf we must have ${\varphi}_P (t{''}) = {\varphi}_Z (t{''})$}. In this case, we have
 $(\dot{\varphi}_P  (t''^{+}) - \dot{\varphi}_Z (t''^{+}))\leq 0$, as if it is strictly positive, an integral argument will lead to a contradiction with ${\varphi}_P (t) <{\varphi}_Z (t)$ for $t \in (t{''}, t^*)$. Using the continuity of the states and co-states and as from Theorem \ref{thm:constraints}, $S(t'')>0$, \eqref{eq:simple_Z2} becomes:
\begin{align}
\dot{\varphi}_P  (t''^{+}) - \dot{\varphi}_Z (t''^{+}) &= \beta S(t'') [\gamma\varphi_Z(t'')- G g'(Z(t''))]\leq 0\notag
\\&\hence \quad  \varphi_Z(t{''}) \leq \frac{G g'(Z(t''))}{\gamma},\label{inequality2_nohalt}
\end{align}

We know that for all $t \in (t{''}, t^*)$, $g'(Z(t))-f'(Z+P(t)) <0$, so from \eqref{eq:simple_Z2}, $\dot{\varphi}_Z(t)<-\gamma\beta S\varphi_Z < -M_5 \varphi_Z<0$, where $M_5>0$ is an upper-bound on the continuous $\gamma\beta S$. Thus,
\begin{align}\label{inequality_Z}
\varphi_Z(t'')>\varphi_Z(t^*).
\end{align}
But \eqref{inequality_g}, \eqref{inequality2_nohalt}, and \eqref{inequality_Z} lead to $\varphi_Z(t^*) < \dfrac{ G g'(Z(t^*))}{\gamma}$, which contradicts \eqref{inequality_nohalt}.

Thus $t{''}=0$, and this concludes our specification of the structure of the optimal controls which conform to the structure set out in Theorem \ref{thm:simple}. 
\end{proof}

\section{Simulation}\label{sec:heuristics}
In the preceding sections, we showed that the optimal spreading controls of the malware in all of the described settings can be fully described by a scalar parameter $t^*$. In this section, we investigate the variation of $t^*$ with respect to some system parameters and then compare the relative performance of the optimal spreading controls with simple heuristics (\S \ref{subsec:fig1}).\footnote{{Stealth conscious epidemics are an emerging threat, and while more data is available now than before, their very nature makes real spreading
data hard to come by and a topic of active research, even years after the fact. Thus, our numerical studies are based on simulations with parameters that are justified based on their real-world implication.}}
In these studies, the main parameter of variation is $\gamma$, as a higher $\gamma$ indicates that zombies spread at a faster rate than infection via germination, and thus $\gamma$ represents a measure of the virility of the zombie malware variant. Varying $\gamma$ changes the relative contact rates internal to the model and thus represents different possible dynamics of a malware attack. In contrast, varying $\beta$, the contact rate of germinators and susceptibles, changes the number of contacts across the board, which is equivalent to changing $T$. Thus any variation of $\beta$ would only show how $t^*$ changes for a specific epidemic. \hide{We have studied this variation in the technical report \cite{eshghi2015visibility}.} Finally, we numerically investigate the fragility of the optimal control to network estimation errors in the adaptive defense model and to synchronization errors among germinators (\S \ref{subsec:fig3}).

\subsection{Structure of the optimal malware spread controls and their performance vs heuristics}\label{subsec:fig1}
{We first computed $t^*$ (the optimal switching time) as a function of the relative spread rate of the zombies $\gamma$ for the problems in \S \ref{subsec:simple_model} and \S \ref{subsec:killing_model} (with different values of halting efficacy $\pi$), as well as the optimal controls, for a cost function for which both Theorem \ref{thm:simple} and \ref{thm:killing} apply (Figure \ref{fig:one}). As $\gamma$ increases, zombies are created for a shorter period due to the rapid explosion of their population later on. Furthermore, the addition of a halting control and its increased efficacy leads to the attacker creating zombies for longer, as she can control their spread (and thus their visibility) later on using the halting control. } 

\begin{figure}[htbp]
\centering
\includegraphics[scale=0.24]{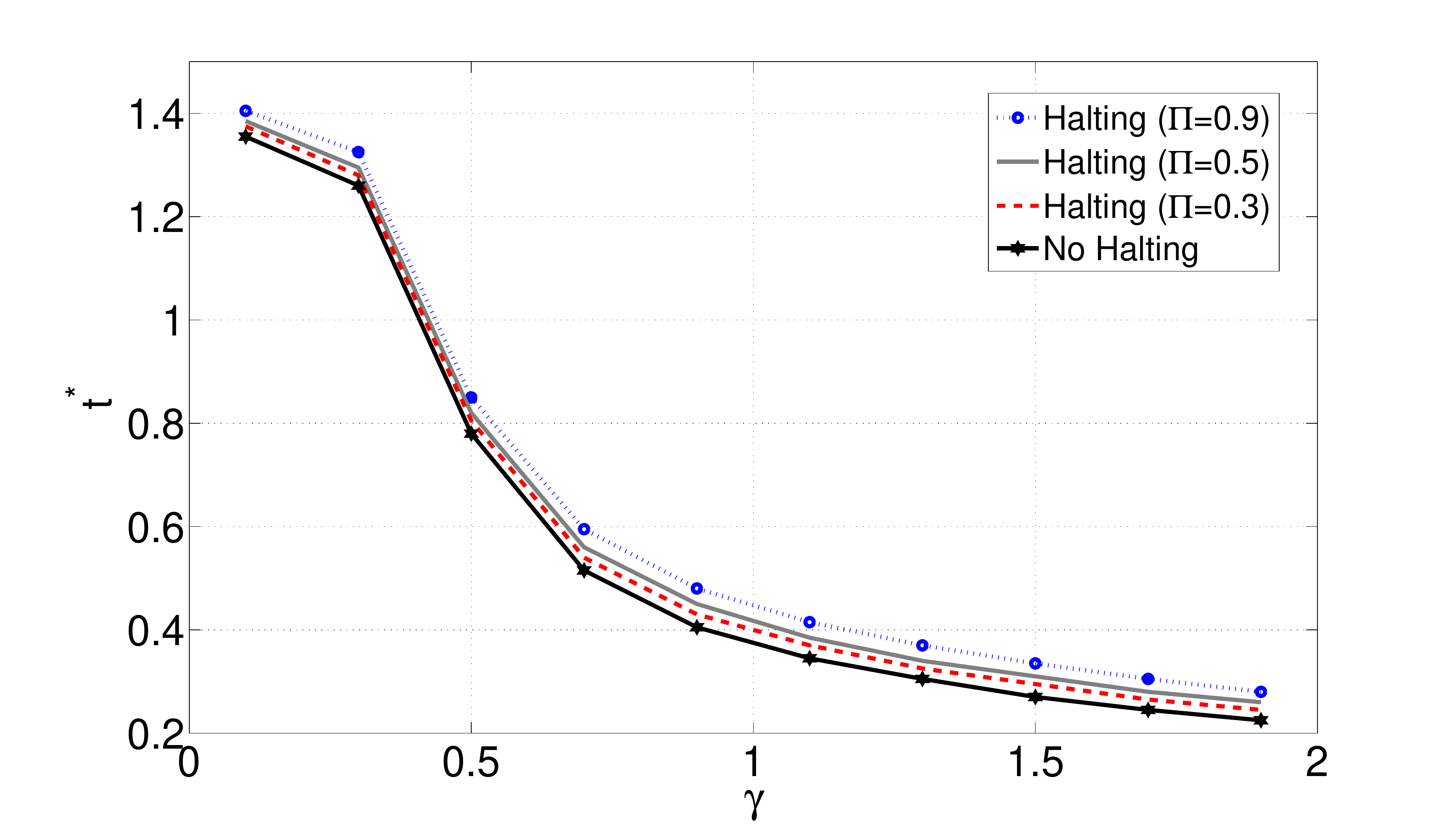}
\caption{We compared $t^*$ (the length of time the zombie control $u_Z$ was equal to one) for the optiml no halting and halting controls as the secondary rate of spread of the zombies ($\gamma$) was varied. Here, $\beta=2$, $T=5$,$(S_0,G_0,Z_0,P_0)=(0.99, 0.01,0,0)$, $f(x)= x^{0.5}$, and $g(x)= k_g x = 0.7 x$.}\label{fig:one}
\end{figure}

{We then compared the cost of these two optimal controls to that of simple heuristics: for the model in \S \ref{subsec:simple_model}, {\em Always Zombie} and {\em Always Passive} represent the two most extreme policies - {\em Always Zombie} sets $u_Z(t)=1$ and $u_P(t)=0$ for all times, while {\em Always Passive} does the exact opposite.
Thus, in these heuristics the germinators only ever propagate one fixed type of malware variant. In the {\em Optimal Static Mixing} heuristic, the attacker chooses a fixed ratio for $u_Z$ and $u_P$ at all times. Our optimal controls are titled {\em No Halting} and {\em Halting}, the latter indexed by the value of $\pi$ (which represents the relative success of the germinators in halting zombies). The efficacy of the policies is evaluated as $\gamma$, the relative propagation rate of the zombies is varied (Figure \ref{fig:two}, which is presented for the same parameters as those used in Figure \ref{fig:one}).

The optimal controls perform much better than the heuristics, with the halting control outperforming the no-halting control for by as much as 10\% for large values of $\pi$ (where the halting control is efficient) and $\gamma$ (where the zombie variant propagation is rapid), both factors which penalize sub-optimal decision-making. This vindicates the assumption that the attacker would be wise to utilize the halting control were it to be available. Out of the simple heuristics, optimal static mixing has the maximum utility, which is typically 10\% below that of even the no-halting optimal control.}

\begin{figure}[htbp]
\centering
\includegraphics[scale=0.24]{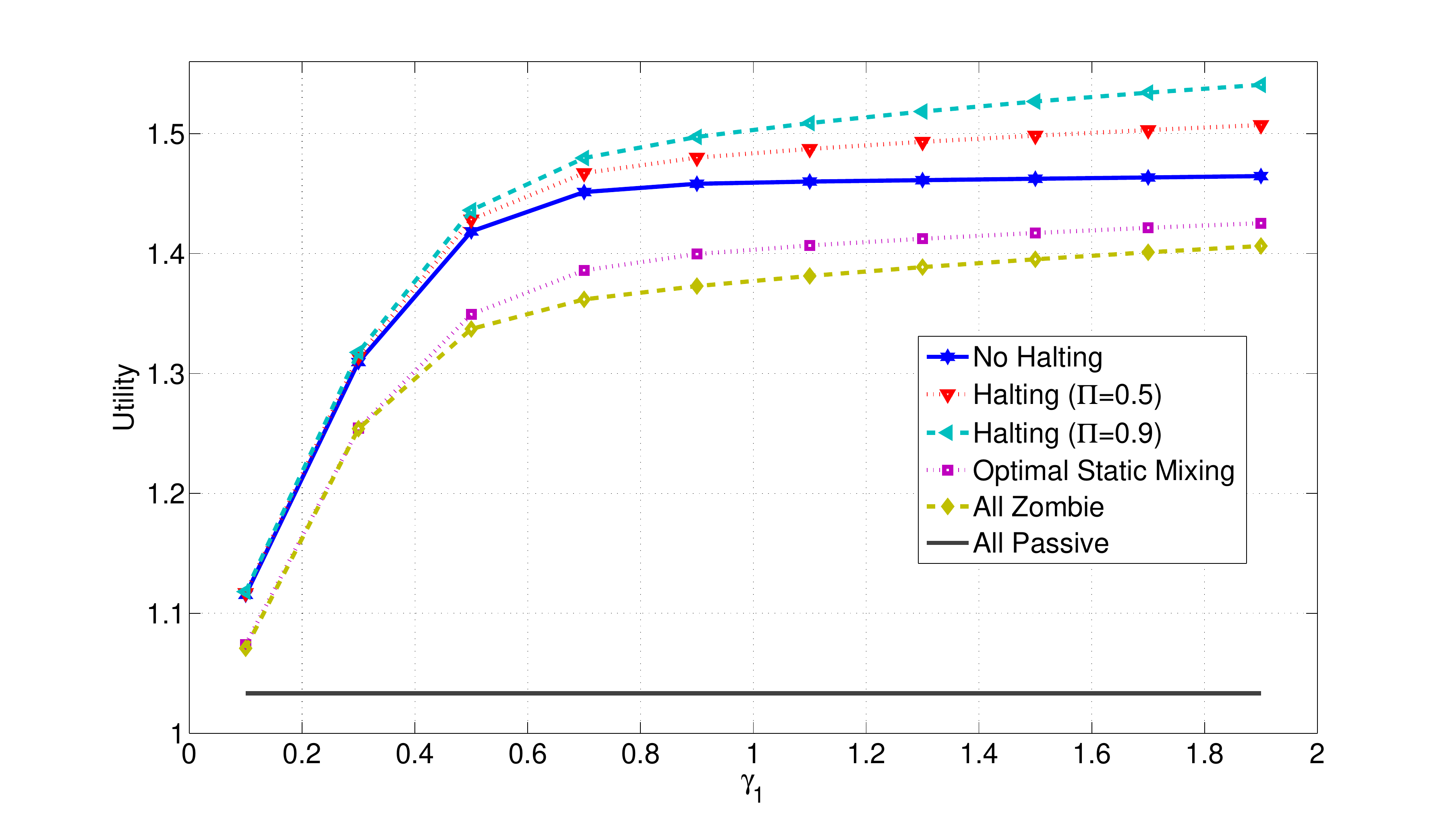}
\caption{Comparison of the damage utilities across the optimal controls and heuristics for the parameters of Fig. \ref{fig:one}.}\label{fig:two}
\end{figure}

\subsection{Fragility of the optimal damage to network estimation errors and synchronization errors in the germinators}\label{subsec:fig3}
We then investigated how the optimal control would fare when the network, which is capable of adaptive defense (i.e., the model in \S\ref{subsec:beta_model}), has an erroneous estimate of the fraction of zombies (Figure \ref{fig:three}). The optimal attack policy is derived with the assumption that the network's defense policy is based on the correct observation of the visibility of the epidemic (i.e., the fraction of zombies), information that is rarely available. Figure \ref{fig:three} shows that the optimal control is remarkably robust to the network's estimation errors up, with an error of 5\% even when the estimation error is 40\%. In many cases, the performance is much better.

\begin{figure}[htbp]
\centering
\includegraphics[scale=0.24]{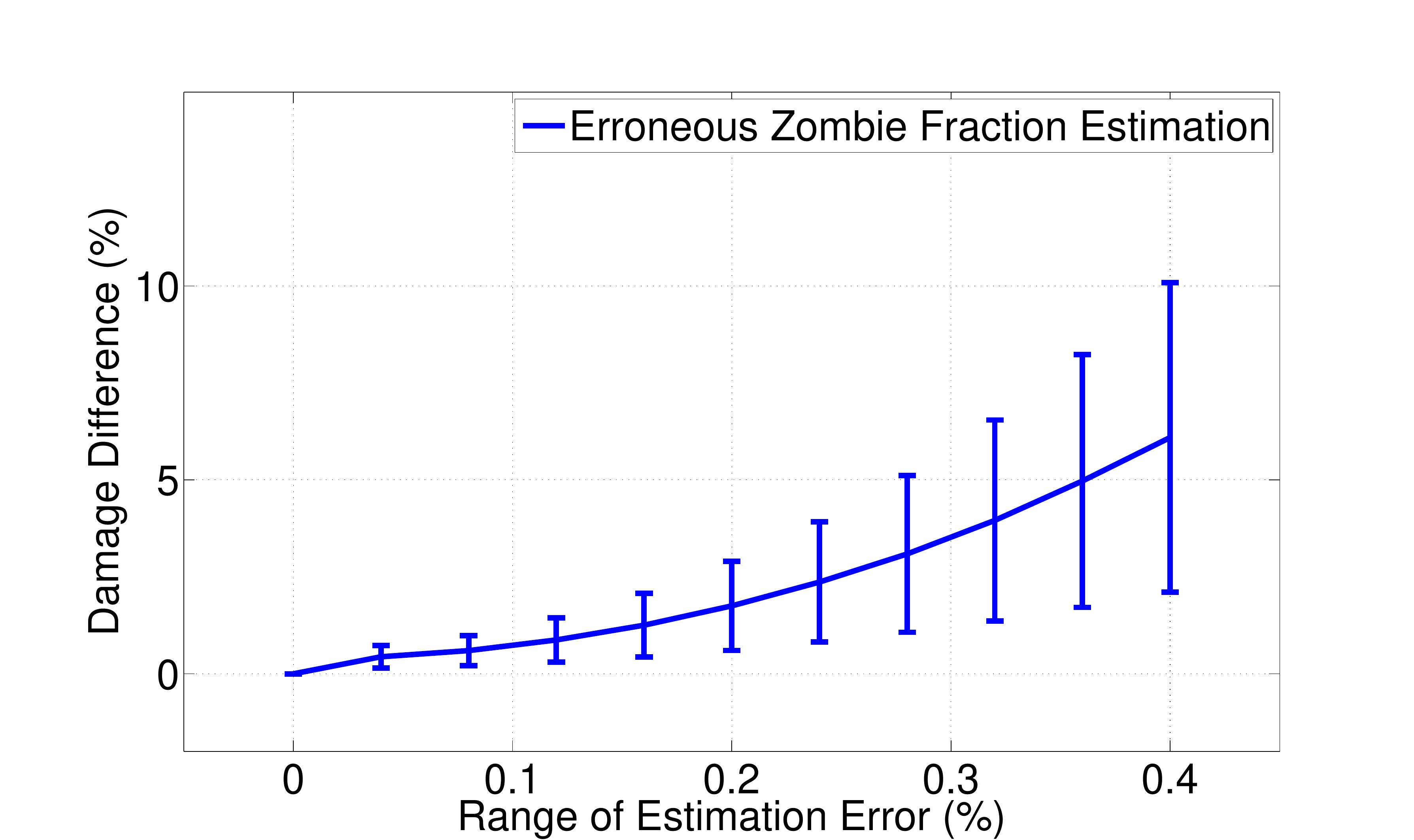}
\caption{The network was assumed to make unbiased random estimation errors at each time instant with the range depicted on the x-axis. The solid line shows the average difference in damage relative to the optimal over 50 runs of the estimating network. Here, we used an exponential sigmoid $\beta(Z)$ with $\beta_0=1$, $\alpha=100$, $T=15$, $\gamma=1.4$, $Z_{th}=0.01$, $(S_0,G_0,Z_0,P_0)=(0.999, 0.001,0,0)$, and $f(x)=x^{0.9}$.}\label{fig:three}
\end{figure}

Finally, we examined how synchronization errors among the germinators would affect the utility of the malware. One of the benefits of the malware spread models was that they assumed that only this small fraction of nodes, which is under the direct control of the attacker, has to coordinate their actions. To examine the fragility of the optimal control to this coordination, once the optimal policy is derived, random errors are introduced to the clocks of the germinators, and the resulting utilities are compared over 100 runs of the simulation (Figure \ref{fig:four}). As can be seen, the damage of both the no halting ($\pi=0$) and halting ($\pi=0.5$) cases is distributed around the damage obtained by the calculated optimal control, and only suffers a 10-15\% performance drop for synchronization errors of up to $30\%$ of $t^*$ in the small number of germinators.

Furthermore, it can be seen that the synchronized infinite-node optimal control can actually perform slightly worse than the case where there are synchronization errors on a finite number of nodes, even in the mean. We can explain this as follows: in the previous sections, we characterized the optimal solution for the problem in \S \ref{subsec:statement} under the assumption that the number of nodes was infinite. For a finite number of nodes, even without synchronization errors, the damage sustained by the simulated network can be different from (and potentially less than) that computed using the computational optimal control framework. 

These studies lead to the conclusion that an adversary will not be deterred by the possibility of errors in estimation and synchronization of the malware spread, further sounding the alarm about the emerging trend of visibility-aware malware.

\begin{figure}[htbp]
\centering
\includegraphics[scale=0.24]{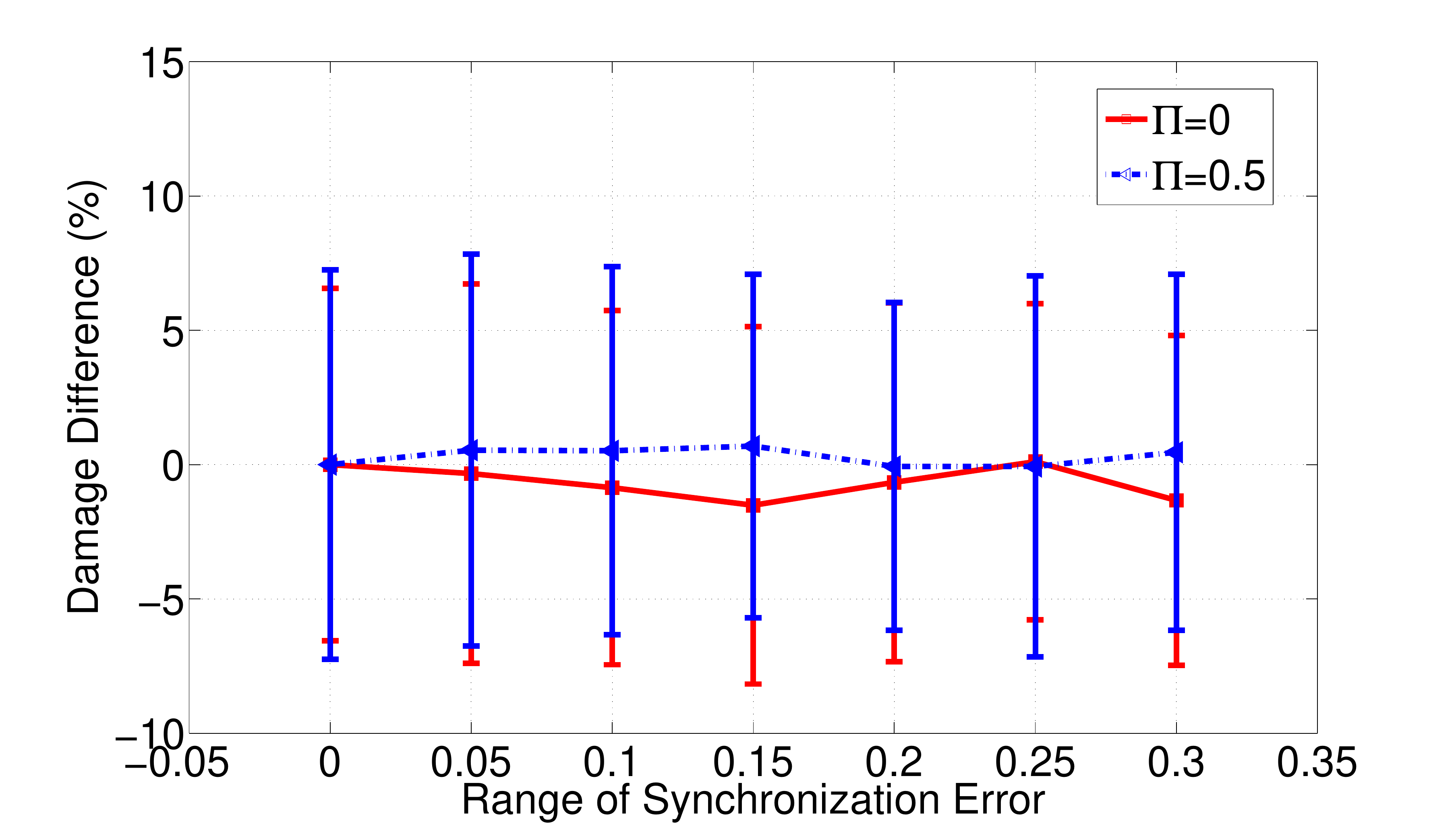}
\caption{Germinators were assumed to have unbiased random synchronization errors at each time instant with the range depicted on the x-axis. The lines shows average damage over 100 runs with unsynchronized germinators. Here, $\beta=2$, $\gamma=0.5$, $T=5$, $(S_0,G_0,Z_0,P_0)=(0.99, 0.01,0,0)$, $f(x)= x^{0.5}$, and $g(x)= k_g x = 0.7 x$, and the simulation was run for 500 nodes (i.e., 5 germinators).}\label{fig:four}
\end{figure}

\hide{We also look at visibility and damage separately on a network trace from SigComm '09 in Figure \ref{fig:4}. This data-set includes ``Bluetooth encounters, opportunistic messaging, and social profiles of 76 users of MobiClique application at SIGCOMM 2009''. We show how the damage and visibility of an infection evolves using the optimal policy, as well as  the case were sub-optimal heuristics are used.}

\vspace{-0.1in}

{\section{Future Directions}
{In this paper we investigated the optimal controls for the SGZP model with and without halting with no explicit network defense (\S \ref{subsec:simple_model} and \ref{subsec:killing_model}), and without halting for the case with adaptive network defense (\S \ref{subsec:beta_model}). This leaves open the case of the SGZP model with halting and adaptive defense. Initial analytical investigations show that Theorem \ref{thm:killing} is likely generalizable to this case, barring some technical issues that will be investigated in the future. {In principle, $\gamma$ can also be a variable to be optimized by the attacker in all models. Furthermore, the model can be extended to a botnet case where the attack is unleashed only when the damage-visibility trade-off is at the optimal point -- the same arguments as in the paper would hold in that case, with the difference that the terminal time will be free.} The set-up and formulation of the visibility problem is, to the best of our knowledge, novel, and thus leads itself to analysis both in the mean-field regime and in more structured settings. In particular, in the mean-field case, possible patching will be addressed at a later stage, as well as the dynamic game that would result from such a competition. 

{The current work is an abstraction of practical cybersecurity problems mainly due to the homogeneous mixing assumption.} Another possible direction is to look at the optimal control of such an epidemic in sub-populations with differentiating characteristics (e.g., location, contact rate) {as a way to relax the homogeneous mixing assumption (e.g., by following the roadmap in \cite{eshghi2015optimal})}. Such a generalization would better model Stuxnet in particular, with the goal being to maximize the number of infected agents in a particular region, while minimizing the total number of detectable zombies.}


\vspace{-0.2in}

\bibliographystyle{ieeetr}
\bibliography{Reference1}

\begin{thebibliography}{10}

\bibitem{murray1988application}
W.~H. Murray, ``The application of epidemiology to computer viruses,'' {\em
  Computers \& Security}, vol.~7, no.~2, pp.~139--145, 1988.

\bibitem{anderson2012measuring}
R.~Anderson, C.~Barton, R.~B{\"o}hme, R.~Clayton, M.~van Eeten, M.~Levi,
  T.~Moore, and S.~Savage, ``Measuring the cost of cybercrime.,'' in {\em
  WEIS}, 2012.

\bibitem{bailey2005blaster}
M.~Bailey, E.~Cooke, F.~Jahanian, and D.~Watson, ``The blaster worm: Then and
  now,'' {\em Security \& Privacy, IEEE}, vol.~3, no.~4, pp.~26--31, 2005.

\bibitem{Regin}
Symantec, ``Regin: Top-tier espionage toolenables stealthy surveillance,'' {\em
  White paper, Symantec Corp., Security Response}, 2014.

\bibitem{falliere2011w32}
N.~Falliere, L.~O. Murchu, and E.~Chien, ``W32. stuxnet dossier,'' {\em White
  paper, Symantec Corp., Security Response}, 2011.

\bibitem{langner2013stuxnet}
R.~Langner, ``To kill a centrifuge: A technical analysis of what stuxnet’s
  creators tried to achieve,'' tech. rep., 2013.

\bibitem{nyt}
D.~Sanger, ``Obama order sped up wave of cyberattacks against iran,'' {\em New
  York Times}, June 1st, 2012.

\bibitem{bencsath2012cousins}
B.~Bencs{\'a}th, G.~P{\'e}k, L.~Butty{\'a}n, and M.~F{\'e}legyh{\'a}zi, ``The
  cousins of stuxnet: Duqu, flame, and gauss,'' {\em Future Internet}, vol.~4,
  no.~4, pp.~971--1003, 2012.

\bibitem{zetter2011stuxnet}
K.~Zetter, ``Digital detectives deciphered stuxnet, the most menacing malware
  in history,'' {\em Wired Magazine}, 2011.

\bibitem{piqueira2005epidemiological}
J.~R. Piqueira, B.~F. Navarro, and L.~H. Monteiro, ``Epidemiological models
  applied to viruses in computer networks,'' {\em Journal of Computer Science},
  vol.~1, no.~1, p.~31, 2005.

\bibitem{karrer2011competing}
B.~Karrer and M.~Newman, ``Competing epidemics on complex networks,'' {\em
  Physical Review E}, vol.~84, no.~3, p.~036106, 2011.

\bibitem{kendall1983epidemics}
W.~Kendall and I.~Saunders, ``Epidemics in competition ii: the general
  epidemic,'' {\em Journal of the Royal Statistical Society. Series B
  (Methodological)}, pp.~238--244, 1983.

\bibitem{weng2012competition}
L.~Weng, A.~Flammini, A.~Vespignani, and F.~Menczer, ``Competition among memes
  in a world with limited attention,'' {\em Scientific Reports}, vol.~2, 2012.

\bibitem{hethcote1973optimal}
H.~W. Hethcote and P.~Waltman, ``Optimal vaccination schedules in a
  deterministic epidemic model,'' {\em Mathematical Biosciences}, vol.~18,
  no.~3, pp.~365--381, 1973.

\bibitem{morton1974optimal}
R.~Morton and K.~Wickwire, ``On the optimal control of a deterministic
  epidemic,'' {\em Advances in Applied Probability}, pp.~622--635, 1974.

\bibitem{wickwire1975optimal}
K.~Wickwire, ``Optimal isolation policies for deterministic and stochastic
  epidemics,'' {\em Mathematical biosciences}, vol.~26, no.~3, pp.~325--346,
  1975.

\bibitem{wickwire1976optimal}
K.~Wickwire, ``Optimal control policies for reducing the maximum size of a
  closed epidemic: I. deterministic dynamics,'' {\em Mathematical Biosciences},
  vol.~30, no.~1, pp.~129--137, 1976.

\bibitem{behncke2000optimal}
H.~Behncke, ``Optimal control of deterministic epidemics,'' {\em Optimal
  control applications and methods}, vol.~21, no.~6, pp.~269--285, 2000.

\bibitem{kandhway2014run}
K.~Kandhway and J.~Kuri, ``How to run a campaign: Optimal control of sis and
  sir information epidemics,'' {\em Applied Mathematics and Computation},
  vol.~231, pp.~79--92, 2014.

\bibitem{kim2004measurement}
J.~Kim, S.~Radhakrishnan, and S.~K. Dhall, ``Measurement and analysis of worm
  propagation on internet network topology,'' in {\em Computer Communications
  and Networks, 2004. ICCCN 2004. Proceedings. 13th International Conference
  on}, pp.~495--500, IEEE, 2004.

\bibitem{zou2002code}
C.~C. Zou, W.~Gong, and D.~Towsley, ``Code red worm propagation modeling and
  analysis,'' in {\em Proceedings of the 9th ACM conference on Computer and
  communications security}, pp.~138--147, ACM, 2002.

\bibitem{garetto2003modeling}
M.~Garetto, W.~Gong, and D.~Towsley, ``Modeling malware spreading dynamics,''
  in {\em INFOCOM 2003. Twenty-Second Annual Joint Conference of the IEEE
  Computer and Communications. IEEE Societies}, vol.~3, pp.~1869--1879, IEEE,
  2003.

\bibitem{lu2014botnets}
Z.~Lu, W.~Wang, and C.~Wang, ``How can botnets cause storms? understanding the
  evolution and impact of mobile botnets,'' {\em IEEE INFOCOM}, 2014.

\bibitem{hall2006thwarting}
K.~J. Hall, {\em Thwarting network stealth worms in computer networks through
  biological epidemiology}.
\newblock PhD thesis, Citeseer, 2006.

\bibitem{schramm2013lanchester}
H.~C. Schramm and D.~P. Gaver, ``Lanchester for cyber: The mixed
  epidemic-combat model,'' {\em Naval Research Logistics (NRL)}, vol.~60,
  no.~7, pp.~599--605, 2013.

\bibitem{khouzani2011maximum}
M.~Khouzani and S.~Sarkar, ``Maximum damage battery depletion attack in mobile
  sensor networks,'' {\em Automatic Control, IEEE Transactions on}, vol.~56,
  no.~10, pp.~2358--2368, 2011.

\bibitem{watkins2015deterministic}
N.~J. Watkins, C.~Nowzari, V.~M. Preciado, and G.~J. Pappas, ``Deterministic
  bounding systems for stochastic compartmental spreading processes,'' {\em
  arXiv preprint arXiv:1507.05208}, 2015.

\bibitem{eshghi2014optimal2}
S.~Eshghi, M.~Khouzani, S.~Sarkar, and S.~S. Venkatesh, ``Optimal patching in
  clustered malware epidemics,'' {\em arXiv preprint arXiv:1403.1639}, 2014.

\bibitem{eshghi2015optimal}
S.~Eshghi, M.~Khouzani, S.~Sarkar, N.~Shroff, and S.~S. Venkatesh, ``Optimal
  energy-aware epidemic routing in dtns,'' {\em IEEE Transactions on Automatic
  Control}, vol.~60, no.~6, pp.~1554--1569, 2015.

\bibitem{singh2010delay}
C.~Singh, A.~Kumar, and R.~Sundaresan, ``Delay and energy optimal two-hop
  relaying in delay tolerant networks,'' in {\em Modeling and Optimization in
  Mobile, Ad Hoc and Wireless Networks (WiOpt), 2010 Proceedings of the
  8\textsuperscript{th} International Symposium on}, pp.~256--265, IEEE, 2010.

\bibitem{khouzani2010dynamic}
M.~Khouzani and S.~Sarkar, ``Dynamic malware attack in energy-constrained
  mobile wireless networks,'' in {\em Information Theory and Applications
  Workshop (ITA), 2010}, pp.~1--11, IEEE, 2010.

\bibitem{chen2011lessons}
T.~M. Chen and S.~Abu-Nimeh, ``Lessons from stuxnet,'' {\em Computer}, vol.~44,
  no.~4, pp.~91--93, 2011.

\bibitem{groenevelt2005message}
R.~Groenevelt, P.~Nain, and G.~Koole, ``The message delay in mobile ad hoc
  networks,'' {\em Performance Evaluation}, vol.~62, no.~1, pp.~210--228, 2005.

\bibitem{kurtz1970solutions}
T.~Kurtz, ``Solutions of ordinary differential equations as limits of pure jump
  markov processes,'' {\em Journal of Applied Probability}, pp.~49--58, 1970.

\bibitem{gast2010mean}
N.~Gast, B.~Gaujal, and J.~Le~Boudec, ``{Mean field for Markov decision
  processes: from discrete to continuous optimization},'' {\em Arxiv preprint
  arXiv:1004.2342}, 2010.

\bibitem{daley1965stochastic}
D.~Daley and D.~G. Kendall, ``Stochastic rumours,'' {\em IMA Journal of Applied
  Mathematics}, vol.~1, no.~1, pp.~42--55, 1965.

\bibitem{khouzani2011dynamic}
M.~Khouzani, S.~Sarkar, and E.~Altman, ``A dynamic game solution to malware
  attack,'' in {\em INFOCOM, 2011 Proceedings IEEE}, pp.~2138--2146, IEEE,
  2011.

\bibitem{seierstad1987optimal}
A.~Seierstad and K.~Sydsaeter, {\em Optimal control theory with economic
  applications}, vol.~20.
\newblock North-Holland Amsterdam, 1987.

\bibitem{stewart2006calculus}
J.~Stewart, ``Calculus early transcendentals, 6e,'' {\em Belmont, CA: Thompson
  Brooks/Cole}, 2006.

\end{thebibliography}

\begin{appendices}

\vspace{-0.1in}
\section{Proof of Theorem \ref{thm:killing}}\label{sec:Optimal_killing1}

\begin{proof}
This proof follows the same structure as that of Theorem \ref{thm:simple}.

As before, we define continuous co-states $(\lambda_S, \lambda_P, \lambda_Z, \lambda_0)$ such that at points of continuity of the controls:   

\begin{align}\label{costates}
\dot{\lambda}_S &= (\lambda_S-\lambda_P) \beta G u_P+ (\lambda_S-\lambda_Z)[\beta G u_Z +  \gamma \beta Z]\notag\\
\dot{\lambda}_Z &=  \lambda_0g'(Z)-\lambda_0f'(Z+P) +(\lambda_S-\lambda_Z) \gamma \beta S \notag\\&~~+ (\lambda_Z-\lambda_P) \pi \beta G u_{h}\notag\\
\dot{\lambda}_P &=  -\lambda_0f'(Z+P),
\end{align}
with final state constraints:
\begin{align}\label{terminal}
\lambda_S(T) = \lambda_Z(T) = \lambda_P(T) = 0.
\end{align}
To characterize optimal controls, we define the Hamiltonian to be:
\begin{align}
\ham (t) = & \lambda_0 (f(Z+P)-g(Z)) +(\lambda_P-\lambda_Z) \pi \beta G Z u_{h}\notag \\&\hspace{-0.5in}+ (\lambda_Z-\lambda_S) [\beta G S u_Z+
\gamma \beta Z S ]+(\lambda_P-\lambda_S) \beta G S u_P.
\end{align}
Pontryagin's Maximum Principle again gives the following necessary conditions for an optimal control vector $u^*$:
\begin{flalign}
 &(\lambda_S, \lambda_P, \lambda_Z, \lambda_0)\neq \vec{0}~~~ \lambda_0\in\{0,1\}\label{vec_neq_zero1},
\\& \forall_{u \in \mathcal{U}, t\in[0,T]}~\ham({ S}^*, { Z}^*, { P}^*, u^*, \lambda_S(t), \lambda_P(t), \lambda_Z(t), \lambda_0, t) \geq\notag\\ &\ham({ S}^*, { Z}^*, { P}^*, u, \lambda_S(t), \lambda_P(t), \lambda_Z(t), \lambda_0, t)\label{maximization1}.
\end{flalign}

Again, if $\lambda_0=0$, $(\lambda_S(T), \lambda_P(T), \lambda_Z(T), \lambda_0)= \vec{0}$, a contradiction, so $\lambda_0=1$.

Now, we have:
\begin{align*}
\dot{\lambda}_P-\dot{\lambda}_Z &= - g'(Z) -(\lambda_S-\lambda_Z) \gamma \beta S - (\lambda_Z-\lambda_P) \pi \beta G u_{h} \\
\dot{\lambda}_S-\dot{\lambda}_Z &= f'(Z+P)- g'(Z)+ (\lambda_S-\lambda_P) \beta G u_P\\&~~+ (\lambda_S-\lambda_Z) \beta G u_Z + (\lambda_S-\lambda_Z) \gamma \beta (Z-S)\\&~~ - (\lambda_Z-\lambda_P) \pi \beta G u_{h}\\
\dot{\lambda}_S -\dot{\lambda}_P &= f'(Z+P)+ (\lambda_S-\lambda_Z) [\beta G u_Z+ \gamma \beta Z]\\&~~+ (\lambda_S-\lambda_P) \beta G u_P,
\end{align*}

\subsubsection{Structure of the optimal control}
If we define:
\begin{subequations}\label{phis}
\begin{align}
\varphi_P &= (\lambda_P-\lambda_S) \beta G S \\
\varphi_Z &= (\lambda_Z-\lambda_S) \beta G S\\
\varphi_{h} &= (\lambda_P-\lambda_Z) \pi \beta G Z,
\end{align}
\end{subequations}
then, the Hamiltonian becomes:
\begin{align*}\small
\ham (t) &= f(Z+P)-g(Z) + \varphi_P u_P + \varphi_Z u_Z + \varphi_{h} u_{h} \notag\\&~~+
(\lambda_Z-\lambda_S) \gamma \beta Z S.
\end{align*} 
Also notice that:
\begin{align}\label{phicondition}
\varphi_{h}  = \pi \dfrac{Z}{S} (\varphi_P-\varphi_Z).
\end{align}

The maximization of the Hamiltonian \eqref{maximization1}, added to the sum constraints for the controls \eqref{sum_bound_nohalt}, leads to the following optimality conditions for the controls:

\vspace{-0.1in}
\begin{subnumcases}{(u_P, u_Z)=}\label{maxi}
(0,0)\qquad \varphi_P<0 ,~~\varphi_Z<0\\
(1,0)\qquad \varphi_P>0 ,~~\varphi_P > \varphi_Z \label{eq:b2}\\
(0,1)\qquad \varphi_Z>0 ,~~\varphi_Z > \varphi_P\label{eq:c2}\\
(?,?)\qquad \varphi_Z = \varphi_P\geq 0 \\
(?,0)\qquad \varphi_P = 0,~~\varphi_Z < 0 \\
(0,?)\qquad \varphi_Z = 0,~~\varphi_P < 0
\end{subnumcases}
Furthermore,
\begin{align}\label{sum_halting}
\varphi_Z(t)>0~ \text{or}~\varphi_P(t)>0 \hence u_P(t) + u_Z(t) =1,
\end{align} 
as if that is not true, we can increase $\ham(t)$ by adding to either $u_P(t)$ or $u_Z(t)$, a contradiction with the Hamiltonian maximization condition of the Maximum Principle \eqref{maximization1}. Also, 
\begin{subnumcases}{u_{h}=}
0\qquad \varphi_{h}<0 \\
1\qquad \varphi_{h}>0 \label{maxi1}\\
?\qquad \varphi_{h}=0
\end{subnumcases}.
Using \eqref{phicondition}, we can rewrite the above as:
\begin{subnumcases}{u_{h}=}\label{maxi2}
0\qquad \varphi_{P}<\varphi_Z ~\&~ Z(t)>0\label{eq:h_cases_0}\\
1\qquad \varphi_{P}>\varphi_Z~ \&~ Z(t)>0\label{eq:h_cases_1}\\
?\qquad \varphi_{P}=\varphi_Z~ \text{or}~Z(t)=0\label{eq:h_cases_2}
\end{subnumcases}.
From \eqref{phis} and the state and costate evolution equations and after trite manipulation, we have:
\begin{subequations}\label{summary_halt}
\begin{align}
\dot{\varphi}_P &=-\beta G S f'(Z+P)+\beta G u_Z (\varphi_Z-\varphi_P)  \notag\\&~~+ \gamma \beta Z (\varphi_Z-\varphi_P)\label{eq:P_dot} \\
\dot{\varphi}_Z &=\beta G S (k_g-f'(Z+P)) - \gamma \beta S \varphi_Z\notag\\&~~+\beta G (u_P- \pi u_{h})  (\varphi_P-\varphi_Z) \\
\dot{\varphi}_P - \dot{\varphi}_Z &=- (\varphi_P-\varphi_Z) (\beta G u_Z + \gamma \beta Z + \beta G u_P - \beta G u_{h})\notag\\&~~- \beta G S k_g+ \gamma \beta S \varphi_Z\\
\dot{\varphi}_{h}
&= - \pi \beta G Z k_g  + \pi \beta G u_Z (\varphi_P-\varphi_Z) + \pi \gamma \beta Z \varphi_{P}.
\end{align}
\end{subequations}
{From here on, the proof follows the same outline laid out in \S \ref{subsec:methodology} (in terms of finding $t^*$ and $t''$ and proving $t''=0$); however, the algebraic expressions for $\dot{\varphi}_Z$, $\dot{\varphi}_P$ are different and $\varphi_h(t)$ is introduced in the dynamics, necessitating the use of different and context-specific analytical arguments.}

\subsubsection{Time interval leading up to $T$ and the existence of $t^*$}
{We follow the evolution of $\varphi_Z$, $\varphi_P$, and $\varphi_h$ for a time interval leading to $T$ and prove the existence of $t^*$ such that we have $u_P(t)=1$, $u_Z(t)=0$, and, if $Z(T)>0$, $u_h(t)=1$ for all $t\in(t^*,T)$ (otherwise, $u_h$ can be arbitrary over this interval).} From the terminal time costate conditions \eqref{terminal}:
\begin{subequations}
\begin{align}
&\varphi_P (T) = \varphi_Z (T) =\varphi_{h} (T) = 0,\label{eq:terminals}\\
&\dot{\varphi}_P (T^-)= -f'((Z+P)(T^-)) \beta G S(T^-) < 0,\\
&\dot{\varphi}_P (T^-)- \dot{\varphi}_Z (T^-)= - \beta GS(T^-) k_g< 0,\label{eq:terminal_dots}\\
&\dot{\varphi}_{h} (T^-)= - \pi \beta G Z(T^-) k_g \leq 0. 
\end{align}
\end{subequations}

Now, we may either have $Z(T)=0$ or $Z(T)>0$ due to Theorem \ref{thm:constraints}.

We start by {\bf considering the case where $Z(T)=0$}. From \eqref{eq:Z_killing} we have $\dot{Z} \geq Z (\gamma \beta S {- \pi \beta G u_{h}})\geq M_6 Z$ for $t \in [0, T]$, where $M_6>0$ is an upper-bound on the $\gamma \beta S$ over the whole interval. Therefore, $Z(t)e^{M_6(t-T)}\leq Z(T)=0$. Thus we must have $Z(t)=0$ for all $t\in[0,T]$. This means that $\dot{Z}(t)= \beta G S u_Z =0$ over this interval, which from Theorem \ref{thm:constraints} leads to $u_Z(t)=0$ for all $t\in [0,T]$. Furthermore, as $Z(t)$ is never positive, $u_h(t)$ will have no effect on the dynamics of the system, and can thus be arbitrary. Finally, \eqref{eq:P_dot} and \eqref{eq:terminals} tell us that $\varphi_P(T)=0$ and $\dot{\varphi}_P(t)=-\beta G S f'(P)<0$ over this interval, which leads to $\varphi_P(t)>0$ for $t\in [0,T)$ due to continuity of the states and co-states and the differentiability of $\varphi_P(t)$ using an integral argument. This, along with $u_Z(t)=0$ for all $t\in [0,T]$ and \eqref{sum_halting}
 leads to $u_P(t)=1$ for all $t\in [0,T)$ {(and therefore $t^*=0$)}. So in sum, for all $t\in [0,T)$, $u_P(t)=1$, $u_Z(t)=0$, with $u_h(t)$ taking any arbitrary value. This agrees with the structure set forth in Theorem \ref{thm:killing}.

Henceforth, {\bf we examine the case where $Z(T)>0$}. From \eqref{eq:terminals} and \eqref{eq:terminal_dots}, as before, ${\varphi}_P (t) > max\{{\varphi}_Z (t), 0\}$ for some interval leading up to T due to the continuity of the states and costates and using the definition of a left derivative. Let $(t^*,T)$ be the largest interval over which  this holds for $t\in(t^*,T)$ for some $t^*<T$, leading to the fact that for all such $t$, $u_P(t)=1$ and $u_Z(t)=0$ due to \eqref{eq:b2}.

We now prove that for $t\in [t^*,T]$,  $Z(t)>0$. If $Z(\tau)=0$ at any $\tau\in (t^*,T)$, as $u_Z(t)=0$ in this interval and from \eqref{eq:Z_killing} we will have $\dot{Z} =Z (\gamma \beta S {- \pi \beta G u_{h}})<M_7 Z$ for $t \in [\tau, T]$ and for some $M_7>0$ which is an upper-bound to $\gamma \beta S$. This leads to $Z(t)\leq Z(\tau)e^{M_7(t-\tau)}=0$, or $Z(t)=0$ for all $t\in[\tau,T]$ and especially $Z(T)=0$ which is a contradiction. The same reasoning also applies to $t=t^*$ due to continuity. So for $t\in [t^*,T]$,  $Z(t)>0$. Thus, from \eqref{eq:h_cases_1} and the definition of $t^*$, we have $u_h(t)=1$ for all $t\in(t^*,T)$.

So {\bf if $t^*=0$}, we have $u_P(t)=1$, $u_Z(t)=0$, and $u_h(t)=1$ for all $t\in[0,T)$, which agrees with Theorem \ref{thm:killing}. Now {\bf we consider $t^*>0$}.

\subsubsection{Time interval leading up to $t^*>0$ and the existence of $t''$}
{We now look at the evolution of $\varphi_Z$, $\varphi_P$, and $\varphi_h$ for a time interval leading to $t^*>0$, and show $t''$ exists such that for $t\in(t'',t^*)$ we must have $u_P(t)=0$, $u_h(t)=0$, and $u_Z(t)=1$.}
For $t \in (t^*,T)$, and after replacing optimal controls, \eqref{summary_halt} becomes:
\begin{subequations}
\begin{align}
\dot{\varphi}_P &=-\beta G S f'(Z+P)+ \gamma \beta Z (\varphi_Z-\varphi_P) \\
\dot{\varphi}_Z &=\beta G S (k_g-f'(Z+P)) +\beta G (1-\pi) (\varphi_P-\varphi_Z) \notag\\&~~- \gamma \beta S \varphi_Z\label{summary_halt_Z}\\
\dot{\varphi}_P - \dot{\varphi}_Z &=- (\varphi_P-\varphi_Z) ( \gamma \beta Z + \beta G(1-\pi))- \beta G S k_g\notag \\&~~+ \gamma \beta S \varphi_Z\label{summary_halt_ZP}
,\\
\dot{\varphi}_{h}
&= \pi Z (\gamma \beta\varphi_P- \beta G k_g)
.
\end{align}
\end{subequations}
It can be seen that $\dot{\varphi}_P(t)<0$ for $t \in (t^*,T)$ (as ${\varphi}_P (t) > {\varphi}_Z (t)$ and $f'(Z(t)+P(t))>0$ in this interval). This, coupled with $\varphi_P(T)=0$ (\eqref{eq:terminals}) leads to $\varphi_P(t^*)>0$ due to continuity and an integral argument. Thus, we must have $\varphi_Z(t^*)=\varphi_P(t^*)>0$ for $t^*>0$. 

For $t\in(t^*,T)$:
\begin{align}\label{eq:ZandP}
\dot{Z} + \dot{P} =\beta G S + \gamma \beta Z S > 0.
\end{align}
Now, {\bf if $ k_g- f'((Z+P)(t^*))\geq0$}, then $k_g- f'((Z+P)(t))\geq0$ for all $t\in(t^*,T)$ due to the convexity of $k_g - f(\cdot)$ in its argument and as $Z+P$ is strictly increasing in this interval (from \eqref{eq:ZandP}). From \eqref{summary_halt_Z}, $\dot{\varphi}_Z > - \gamma \beta S \varphi_Z \geq -M_8 \varphi_Z$ for all $t\in(t^*,T)$, with $M_8$ being an upper-bound on $\gamma \beta S$. Therefore, $\varphi_Z(t^*)<\varphi_Z(T)e^{-M_8(t^*-T)}=0$ due to an integral argument, which means that ${\varphi}_P (t^*) >0 \geq {\varphi}_Z (t^*)$. This contradicts the starting assumption of this argument, which was  ${\varphi}_P (t^*) = {\varphi}_Z (t^*)$.

Therefore, from here on we will {\bf consider $k_g< f'((Z+P)(t^*))$}. At such a point $t^*$, from \eqref{summary_halt_Z} and the continuity of the states and co-states:
\begin{align}\label{eq:difference2}
(\dot{\varphi}_P (t^{*+})- \dot{\varphi}_Z(t^{*+}))=\beta S(t^*) [\gamma \varphi_Z(t^*) - G  k_g].
\end{align}

Now, \eqref{eq:difference2} should be positive, because if this derivative was strictly negative, the definition of the right-derivative would show that $\varphi_Z(t)>\varphi_P(t)$ for $t$ in an interval starting from $t^*$, a contradiction with the definition of $t^*$. So, as $S(t^*)>0$ from Theorem \ref{thm:constraints}:
\begin{align}\label{inequality_halt}
 \beta S(t^*) [\gamma \varphi_Z(t^*)-Gk_g]\geq0
\hence  \varphi_Z(t^*) \geq \dfrac{ G k_g}{\gamma}.
\end{align}
Now, we can see from a continuity argument on \eqref{summary_halt_Z} (given that $\varphi_Z(t^*)=\varphi_P(t^*)>0$) that $\dot{\varphi}_Z (t^{*-})<0$. Thus $\varphi_Z(t)>\varphi_Z(t^*)>0$ for some interval leading up to $t^*$ due to the definition of a left-derivative. Thus, from \eqref{inequality_halt} we must have:
$\varphi_Z(t) > \dfrac{G k_g}{\gamma}$ (and therefore also $\varphi_Z(t) > 0$) for $t$ in some interval leading up to $t^*$. Let $(t',t^*)$ be the maximal such interval. In this interval, from \eqref{summary_halt_ZP}, we have \[\dot{\varphi}_P - \dot{\varphi}_Z >- (\varphi_P-\varphi_Z) ( \gamma\beta Z + \beta G (1-\pi))\geq - M_9 (\varphi_P-\varphi_Z),\] where $M_9>0$ is an upper-bound on the continuous expression $\gamma\beta Z + \beta G (1-\pi)$. So for any $t$ in this interval, $(\varphi_P(t)-\varphi_Z(t))<(\varphi_P(t^*)-\varphi_Z(t^*))e^{-M_9(t-t^*)}=0$. 

Thus, ${\varphi}_P (t)<{\varphi}_Z (t)$ for $t \in (t',t^*)$. As $\varphi_Z(t^*)>0$, due to the continuity of the states and co-states, there exists a maximal interval $(t'',t^*)$ such that ${\varphi}_Z(t)>\max\{{\varphi}_P (t),0\}$. Following from \eqref{eq:c2} , for $t\in(t'',t^*)$ we must have $u_P(t)=0$ and $u_Z(t)=1$. 

As ${\varphi}_Z(t)>{\varphi}_P (t)$, from \eqref{eq:h_cases_0} and \eqref{eq:h_cases_2} we have $Z(t)u_h(t)=0$ for $t\in(t'',t^*)$.
 This leads to $\dot{Z}(t)>0$ in this interval (from \eqref{eq:Z_killing}), which combined with Theorem \ref{thm:constraints} leads to $Z(t)>0$ in this interval. Therefore, from \eqref{eq:h_cases_0} we can also conclude that in this interval, $u_h(t)=0$.

\subsubsection{Proof of $t''=0$}
If $t'' =0$, this concludes our specification of the structure, which agrees with Theorem \ref{thm:killing}. Thus, henceforth we {\bf consider the case where $t''>0$}, and thus {\bf either $\varphi_Z(t'')={\varphi}_P (t'')$ or ${\varphi}_Z(t'')=0$}.

For $t \in (t{''}, t^*)$, \eqref{summary_halt} becomes:
\begin{subequations}
\begin{align}
\dot{\varphi}_P &=-\beta G S f'(Z+P)+\beta G  (\varphi_Z-\varphi_P)  + \gamma \beta Z (\varphi_Z-\varphi_P)\notag \\
\dot{\varphi}_Z &=\beta G S (k_g-f'(Z+P)) - \gamma \beta S \varphi_Z \label{eq:killing_Z2}\\
\dot{\varphi}_P &- \dot{\varphi}_Z ={- (\varphi_P-\varphi_Z) (\beta G + \gamma \beta Z)- \beta G S k_g+ \gamma \beta S \varphi_Z}\label{eq:killing_ZP2}\\
\dot{\varphi}_{h}
&= - \pi \beta G Z k_g  + \pi \beta G (\varphi_P-\varphi_Z) + \pi \gamma \beta Z \varphi_{P}\notag,
\end{align}
\end{subequations}
Now, for $t\in(t'',t^*)$, 
\begin{align}\label{eq:kg}
k_g-f'((Z+P)(t)) < k_g-f'((Z+P)(t^*))<0
\end{align}  
as $k_g-f(\cdot)$ is convex and in this interval and $\dot{P}(t) + \dot{Z}(t) = \dot{Z}(t) = \beta G S + \gamma \beta Z S> 0$ as $u_Z(t)=1$, and $u_P(t)= u_h(t)=0$. So from \eqref{eq:killing_Z2}, $\dot{\varphi}_Z <- \gamma\beta S \varphi_Z\leq -M_{10} \varphi_Z$ with $M_{10}>0$ being the upper-bound of the continuous $\gamma\beta S$, and therefore for all $t\in (t'',t^*)$,  
$\varphi_Z(t)\geq\varphi_Z(t^*)e^{-M_{10}(t-t^*)}$. As $\varphi_Z(t^*)>0$, $\varphi_Z(t)$ is bounded away from zero,
which {\bf leads to $\varphi_Z(t{''})>0$} due to continuity.

 So we {\bf must have ${\varphi}_P (t{''}) = {\varphi}_Z (t{''})$}. In this case, from \eqref{eq:killing_ZP2} we have
 $(\dot{\varphi}_P  (t''^{+}) - \dot{\varphi}_Z (t''^{+}))\leq 0$, as if it is strictly positive, an integral argument will lead to a contradiction with ${\varphi}_P (t) <{\varphi}_Z (t)$ for $t \in (t{''}, t^*)$. Using the continuity of the states and co-states, as well as the fact that $S(t'')>0$ from Theorem \ref{thm:constraints}, \eqref{eq:simple_Z2} becomes: $
\dot{\varphi}_P  (t''^{+}) - \dot{\varphi}_Z (t''^{+}) = \beta S(t'') [\gamma\varphi_Z(t{''})- Gk_g]\leq 0$ and so: 
\begin{align}
  \varphi_Z(t{''}) \leq \dfrac{G k_g}{\gamma},\label{inequality2_halt}
\end{align}
From \eqref{eq:kg} and \eqref{eq:killing_Z2}, $\dot{\varphi}_Z<-\gamma\beta S\varphi_Z < -M_{10} \varphi_Z<0$. So,
\begin{align}\label{inequality_Z2}
\varphi_Z(t'')>\varphi_Z(t^*).
\end{align}
But \eqref{inequality_halt} and \eqref{inequality2_halt} lead to $\varphi_Z(t'')\leq\varphi_Z(t^*)$, which contradicts \eqref{inequality_Z2}.

Thus $t{''}=0$, and this concludes our specification of the structure of the optimal controls which conform to the structure set out in Theorem \ref{thm:killing}. 
\end{proof}

\vspace{-0.15in}

\section{Proof of Theorem \ref{thm:simple} for adaptive defense model}\label{sec:Optimal_beta1}

We first provide a general framework (akin to the one presented for Theorem \ref{thm:simple}), and then we differentiate the analysis based on the type of adaptive defense used by the network: Constant $\beta(Z)$ in \S Appendix \ref{sec:constant},
affine $\beta(Z)$ in \S Appendix \ref{sec:affine}, and sigmoid $\beta(Z)$ in \S Appendix \ref{sec:sigmoid}

As before, define the continuous co-states $(\lambda_S, \lambda_P, \lambda_Z, \lambda_0)$ such that at points of continuity of the controls: 
  \begin{align}\label{costates_betaZ}
\dot{\lambda}_S &= \beta(Z)[(\lambda_S-\lambda_P)  G u_P+ (\lambda_S-\lambda_Z)  (G u_Z + \gamma Z)]\notag\\
\dot{\lambda}_Z &= -\lambda_0f'(Z+P) +(\lambda_S-\lambda_Z) \gamma \beta(Z) S\notag \\&~~+\beta'(Z) [(\lambda_S-\lambda_P) G S u_P+ (\lambda_S-\lambda_Z) G S u_Z\notag\\&~~ +
(\lambda_S-\lambda_Z) \gamma Z S]\notag\\
\dot{\lambda}_P &=  -\lambda_0f'(Z+P),
\end{align}
with final co-state constraints:
\begin{align}\label{terminal_betaZ}
\lambda_S(T) = \lambda_Z(T) = \lambda_P(T) = 0.
\end{align}
To characterize optimal controls, we define the {\em Hamiltonian}:
\begin{align}\label{Hamiltonian_betaZ}
\ham (t) := \lambda_0f(Z+P)+(\lambda_P-\lambda_S) \beta(Z) G S u_P\notag \\+ (\lambda_Z-\lambda_S) \beta(Z) G S u_Z +
(\lambda_Z-\lambda_S) \gamma\beta(Z) Z S
\end{align}

Pontryagin's Maximum Principle \cite[p.182]{seierstad1987optimal} gives us the following necessary conditions for optimality for an optimal control vector $u^*$:
\begin{flalign}
 &(\lambda_S, \lambda_P, \lambda_Z, \lambda_0)\neq \vec{0}, ~~~ \lambda_0 \in \{0,1\}\label{vec_neq_zero2},
\\& \forall_{u \in \mathcal{U}, t\in[0,T]}~\ham({ S}^*, { Z}^*, { P}^*, u^*, \lambda_S(t), \lambda_P(t), \lambda_Z(t), \lambda_0, t) \geq\notag\\ &\ham({ S}^*, { Z}^*, { P}^*, u, \lambda_S(t), \lambda_P(t), \lambda_Z(t), \lambda_0, t)\label{maximization2}.
\end{flalign}

But if $\lambda_0=0$, $(\lambda_S(T), \lambda_P(T), \lambda_Z(T), \lambda_0)= \vec{0}$, a contradiction, so $\lambda_0=1$.

\vspace{-0.1in}
\subsection{General structure of the optimal control}
If we define:
\begin{subequations}\label{phis_betaZ}
\begin{align}
\varphi_P = (\lambda_P-\lambda_S) \beta(Z) G S \label{phip_betaZ}\\
\varphi_Z = (\lambda_Z-\lambda_S) \beta(Z) G S\label{phiz_betaZ},
\end{align}
\end{subequations}
then, the Hamiltonian becomes:
\begin{align*}
\ham (t) = f(Z+P)+ \varphi_P u_P + \varphi_Z u_Z+
(\lambda_Z-\lambda_S) \gamma\beta(Z) Z S.
\end{align*}

The maximization of the Hamiltonian \eqref{maximization2}, added to the sum constraints for the controls \eqref{sum_bound_nohalt}, leads to \eqref{maxi_nohalt} as the optimality conditions for the controls:\hide{, with the obvious corollary {that $\varphi_P(t) u_P(t)\geq 0$,  and $\varphi_Z(t) u_Z(t)\geq 0$ for all $t$. Further, $[\varphi_Z(t)- \varphi_P(t)] u_Z(t)\geq 0$, and $[\varphi_P(t)- \varphi_Z(t)] u_P(t)\geq 0$ for all $t$.} Also,}
\begin{align}\label{sum_betaZ}
\varphi_Z(t)>0~ \text{or}~\varphi_P(t)>0 \hence u_P(t) + u_Z(t) =1,
\end{align} 
as if that is not true, we can add to the instantaneous value of $\ham(t)$ by adding to either $u_P(t)$ or $u_Z(t)$, a contradiction with the Hamiltonian maximization condition \eqref{maximization2}.

From \eqref{phis_betaZ} and the state \eqref{model:betaZ} and costate \eqref{costates_betaZ} evolution equations and after some manipulation, we have:
\begin{subequations}\label{summary_betaZ}
\begin{align}
\dot{\varphi}_P&={-\beta(Z) G S f'(Z+P)+ \beta'(Z) S \varphi_P[Gu_Z + \gamma Z] }\notag\\&~ {- (\varphi_P- \varphi_Z) \beta(Z) [ G u_Z +\gamma Z] }\label{summary_betaZ_P}\\
\dot{\varphi}_Z &= {-  \beta(Z) G S f'(Z+P) - \varphi_PG u_P\beta'(Z) S }\notag\\&~ { -  \varphi_Z \beta(Z) \gamma S + (\varphi_P-\varphi_Z)  \beta(Z) G u_{P}}\label{summary_betaZ_Z}
,\\
\dot{\varphi}_P& - \dot{\varphi}_Z =  {- (\varphi_P- \varphi_Z) \beta(Z)[G (u_Z + u_P) +\gamma (Z + S)]  }\notag\\&~ {+  \varphi_PS \big[\gamma \beta(Z) + \beta'(Z) [G (u_Z + u_P) + \gamma Z]\big]}\label{summary_betaZ_ZP}
\end{align}
\end{subequations}
{Again, the proof follows the outline laid out in \S \ref{subsec:methodology} (i.e., proving the existence of $t^*$ and $t'$, which are, however, defined differently, and proving $t'=0$ for $t^*>0$), with the difference that the algebraic expressions for $\dot{\varphi}_Z$ and $\dot{\varphi}_P$, and therefore all subsequent analytical arguments, will change.}
\subsubsection{Time interval leading up to $T$ and the existence of $t^*$}
{We follow the evolution of $\varphi_Z$ and $\varphi_P$ for a time interval leading to $T$  and prove the existence of $t^*$ such that we have $u_P(t)=1$, and $u_Z(t)=0$ for all $t\in(t^*,T)$.}

From the terminal time costate conditions \eqref{terminal_betaZ} and their directional derivatives \eqref{summary_betaZ}, we have:
\begin{subequations}
\begin{align}
&\varphi_P(T) = \varphi_Z (T) = 0,\label{eq:terminals2}\\
&\dot{\varphi}_P(T^-)=\dot{\varphi}_Z (T^-)= -\beta(Z) G S f'(Z+P) < 0.
\end{align}
\end{subequations}
So, due to continuity of the states and co-states, there is an interval leading up to T, over which we have $\varphi_P(t)>0$ and $\varphi_Z(t)>0$. Let $(t^*,T)$ be the maximal length interval with this property. For $t\in (t^*, T)$, equation \eqref{sum_betaZ} leads to 
\begin{align}\label{eq:sum1}
u_Z(t)+u_P(t)=1.
\end{align}

Now, for $t \in (t^*,T)$, \eqref{summary_betaZ_ZP} becomes:
\begin{align}\label{eq:PZ_t}
\dot{\varphi}_P(t) - \dot{\varphi}_Z (t) =&  - (\varphi_P- \varphi_Z) \beta(Z)[G +\gamma (Z + S)]\notag\\~&  +  \varphi_PS \big[\gamma \beta(Z) + \beta'(Z) [G  + \gamma Z]\big]
\end{align}
{The rest of the analysis depends on the $\beta(Z)$ function - we present different arguments for $\beta(Z)$'s that are constant, affine, and sigmoid (\S Appendices \ref{sec:constant}, \ref{sec:affine}, and \ref{sec:sigmoid}, respectively). For the affine case (\S \ref{sec:affine}), the analysis needs to be broken down into different cases according to the value of $Z(T)$ in relation to the constant $\frac{1}{2}[\dfrac{\beta_{\max}}{a} - \dfrac{G}{\gamma}]$. When $\beta(Z)$ is a sigmoid (\S \ref{sec:sigmoid}), we use different analytical arguments to prove the result depending on whether $e^{ \alpha(Z(T) - Z_{th})} (1- \frac{\alpha}{\gamma} G- \alpha Z(T)) +1$ is less than, equal to, or greater than zero. For the simple case of constant $\beta(Z)$ (\S \ref{sec:constant}), no such conditional arguments are needed.}

\subsection{Constant $\beta(Z)$}\label{sec:constant}
Assume $\beta(Z)=\beta$.\footnote{Note that this is a case of the model in \S \ref{sec:Optimal_beta} with $g\equiv 0$.} In this case, there is no penalty for creating zombies, and we expect zombies to be created for the whole time period.
Then for $t\in(t^*,T)$, \eqref{eq:PZ_t} becomes:$
(\dot{\varphi}_P - \dot{\varphi}_Z) (t) =  \varphi_PS \gamma \beta-(\varphi_P- \varphi_Z) \beta[G +\gamma (Z + S)]
\geq  - (\varphi_P- \varphi_Z) M_{11}$,
for all $t\in (t^*,T)$ and for some $M_{11}>0$ that is an upper-bound for $\beta(G +\gamma (Z + S))$, as $\varphi_P(t)S(t) \gamma \beta>0$ in this interval. Therefore, for $t\in (t^*, T)$, $\varphi_P(t)- \varphi_Z(t)< [\varphi_P(T)- \varphi_Z(T)]e^{-M_{11}(t-T)}=0$ (from \eqref{eq:terminals2}), and thus $\varphi_P(t) < \varphi_Z(t)$ for $t\in(t^*,T)$. 

Due to the continuity of the states and co-states and from the definition of $t^*$, there exists an interval $(t',T)$, with $t'\leq t^*$ such that  $\varphi_Z>\varphi_P$ and $\varphi_Z>0$. These conditions, coupled with \eqref{eq:c} lead to $u_P(t)=0$ and $u_Z(t)=1$ for all $t \in (t', T)$.

We now prove $t'=0$. If this does not hold, either $\varphi_Z(t') = \varphi_P(t')$ or $\varphi_Z(t') = 0$ for some $t'>0$ due to continuity of the states and co-states.

Since $u_P(t)=0$ for $t\in (t',T)$, \eqref{summary_betaZ_Z} becomes: $\dot{\varphi}_Z(t) = -  \beta(Z) G S f'(Z+P)  -  \varphi_Z \beta(Z) \gamma S <0$,
which leads to $\varphi_Z(t')>\varphi_Z(T) = 0$. Thus, $\varphi_Z(t')$ cannot be equal to zero.

{\bf If} $\varphi_Z(t')=\varphi_P(t')$, then from \eqref{summary_betaZ_ZP}, $\beta'(Z)=0$ for constant $\beta(Z)$, and the continuity of the states and co-states: $
\big(\dot{\varphi}_P - \dot{\varphi}_Z\big) (t'^+) =  \varphi_P(t')S(t') \gamma \beta
= \varphi_Z(t')S(t') \gamma \beta>0$,
leading to the existence of an interval $(t', t'')$ over which $\varphi_P(t)>\varphi_Z(t)$, a contradiction with the definition of $t'$. 

Thus, $t'=0$ and $u_Z(t)=1$ and $u_P(t)=0$ for all $t$, which agrees with the statement of Theorem \ref{thm:simple} and our intuition that zombies will be created for the entire period.  \qed

\subsection{Affine $\beta(Z)$}\label{sec:affine}

Assume $\beta(Z) = - a Z + \beta_{\max}$, with $0<a\leq \beta_{\max}$ (as $\beta_{\max}$ is an upperbound on this $\beta(Z)$ and $\beta(Z)>0$). Then, for $t \in (t^*,T)$, \eqref{eq:PZ_t} becomes:
\begin{align}\label{eq:difference3}
\dot{\varphi}_P(t)& - \dot{\varphi}_Z(t) = -  a\varphi_PS \big[\gamma(2Z-\dfrac{\beta_{\max}}{a}) + G\big]\notag\\&~~ {- (\varphi_P- \varphi_Z) (-a Z + \beta_{\max})[G +\gamma (Z + S)]}
\end{align}

Now we break down the situations that can arise based on the value of $Z(T)$ with respect to the fixed $\frac{1}{2}[\frac{\beta_{\max}}{a} - \frac{G}{\gamma}]$:

\subsubsection{$Z(T)\leq\frac{1}{2}[\dfrac{\beta_{\max}}{a} - \dfrac{G}{\gamma}]$}\label{part:I} Note that for this case, we must have $\frac{1}{2}[\frac{\beta_{\max}}{a} - \frac{G}{\gamma}]\geq0$ due to Theorem \ref{thm:constraints}.

We first consider the sub-case where $Z(T)=\frac{1}{2}[\frac{\beta_{\max}}{a} - \frac{G}{\gamma}]=0$. Here, we must have $\dot{Z}(t)=0$ for all $t$ as $\dot{Z}(t)\geq 0$ for all $t$ and as states are continuous. The only way for $\dot{Z}(t)=0$ for all $t$ is for us to have $Z_0=0$ and $u_Z(t)=0$ for all $t<T$ (due to Theorem \ref{thm:constraints}). This leads to \eqref{summary_betaZ_P} becoming $\dot{\varphi}_P(t) ={-\beta(0) G S(t) f'(P(t))<0}$ for all $t<T$, and thus $\varphi_P(t)>0$. This fact, combined with $u_Z(t)=0$ for all $t$ and \eqref{eq:b} leads to $u_P(t)=1$ for all $t$ (i.e., $t^*=0$ in the statement of Theorem \ref{thm:simple}).

Otherwise, we either have (i) $Z(T)=\frac{1}{2}[\frac{\beta_{\max}}{a} - \frac{G}{\gamma}]>0$ or (ii) $Z(T)<\frac{1}{2}[\frac{\beta_{\max}}{a} - \frac{G}{\gamma}]$. 

(i) In this case, from \eqref{model:betaZ} (for which $\beta(Z)>0$ and $G>0$), Theorem \ref{thm:constraints} (which specifies $S(T)>0$), and continuity of the states, we have $\dot{Z}(T^-)>0$. Thus $Z(t)<\frac{1}{2}[\frac{\beta_{\max}}{a} - \frac{G}{\gamma}]$ for some $(t'',T)$. Therefore, as $\dot{Z}(t)\geq0$ for all $t$, so $Z(t)<\frac{1}{2}[\frac{\beta_{\max}}{a} - \frac{G}{\gamma}]$ for all $t<T$.

(ii) Since $\dot{Z}\geq 0$ from \eqref{model:betaZ} and Theorem \ref{thm:constraints}, in this case we also have $Z(t)<\frac{1}{2}[\frac{\beta_{\max}}{a} - \frac{G}{\gamma}]$ for all $t<T$.

Therefore for both (i) and (ii), $\gamma\beta_{\max}-2\gamma aZ(t) - Ga>0$ for all $t<T$. 

 From \eqref{eq:difference3} and for all $t\in (t^*,T)$: $\dot{\varphi}_P(t) - \dot{\varphi}_Z (t) >  - (\varphi_P- \varphi_Z) \beta(Z)[G +\gamma (Z + S)]\geq  - (\varphi_P- \varphi_Z) M_{12}$,
for some $M_{12}>0$ which is an upper-bound to the continuous $\beta(Z)[G +\gamma (Z + S)]$ over this interval. Therefore, for $t\in (t^*, T)$, $\varphi_P(t)- \varphi_Z(t)< [\varphi_P(T)- \varphi_Z(T)]e^{-M_{12}(t-T)}=0$, and thus $\varphi_P(t) < \varphi_Z(t)$ for $t\in(t^*,T)$. 

Due to the continuity of the states and co-states and because for $t\in(t^*,T)$, $\varphi_Z(t)>0$, there exists an interval $(t',T)$, with $t'\leq t^*$ such that both $\varphi_Z(t)>\varphi_P(t)$ and $\varphi_Z(t)>0$. These conditions, coupled with \eqref{eq:c} lead to $u_P(t)=0$ and $u_Z(t)=1$ for all $t \in (t', T)$.

We now prove $t'=0$. If this does not hold, either $\varphi_Z(t') = 0$  or $\varphi_Z(t') = \varphi_P(t')$ for some $t'>0$ due to continuity of the states and co-states.

For $t\in (t',T)$ \eqref{summary_betaZ_Z} becomes, $
\dot{\varphi}_Z(t) =
-  \beta(Z) G S f'(Z+P)  -  \varphi_Z \beta(Z) \gamma S <0$,
which leads to $\varphi_Z(t')>\varphi_Z(T) = 0$.

So we must have $\varphi_Z(t')=\varphi_P(t')$ for $t'>0$. From \eqref{eq:difference3} and the continuity of the states and co-states, $\big(\dot{\varphi}_P - \dot{\varphi}_Z\big) (t'^+) =
\varphi_P(t')S(t') \big[\gamma\beta_{\max}-2\gamma aZ(t') - Ga\big]
=\varphi_Z(t')S(t') \big[\gamma\beta_{\max}-2\gamma aZ(t') - Ga\big]
>0$,
leading to the existence of an interval $(t', t'')$ over which $\varphi_P(t)>\varphi_Z(t)$, a contradiction with the definition of $t'$. 

Thus, $t'=0$ and $u_Z(t)=1$ and $u_P(t)=0$ for all $t$, which agrees with the statement of Theorem \ref{thm:simple}.
 \qed

\subsubsection{$Z(T)>\frac{1}{2}[\dfrac{\beta_{\max}}{a} - \dfrac{G}{\gamma}]$} Due to the continuity of the states, $Z(t)>\frac{1}{2}[\frac{\beta_{\max}}{a} - \frac{G}{\gamma}]$ for $t\in (t_1,T)$ for some $t_1$. Recall that for $t\in(t^*,T)$, $\varphi_P(t)>0$. Thus, for $t\in (t_2,T)$, where $t_2=\max\{t^*,t_1\}$ and with $M_{12}$ again defined as the upper-bound to the continuous $\beta(Z)[G +\gamma (Z + S)]$, \eqref{eq:difference3} leads to:
$\dot{\varphi}_P(t) - \dot{\varphi}_Z (t) <  - (\varphi_P- \varphi_Z) \beta(Z)[G  +\gamma (Z + S)]\leq  - (\varphi_P- \varphi_Z) M_{12}$.
Therefore, in this interval, $\varphi_P(t)- \varphi_Z(t)> [\varphi_P(T)- \varphi_Z(T)]e^{-M_{12}(t-T)}=0$, and thus $\varphi_P(t) > \varphi_Z(t)$ and $\varphi_P(t)>0$ for $t\in(t_2,T)$. 

Now, due to the continuity of the states and co-states, define $(t_3,T)$ to be the maximal length interval over which $\varphi_P(t) > \max\{ \varphi_Z(t), 0\}$. Note that for $t\in(t_3,T)$ we have $u_Z(t)=0$ and $u_P(t)=1$ due to \eqref{eq:b}.

Due to continuity of the states and co-states, either $t_3=0$, in which case $u_Z(t)=0$ and $u_P(t)=1$ for all $t$ (agreeing with the structure of Theorem \ref{thm:simple}), or we have a $t_3>0$ such that $\varphi_P(t_3)=0$ or $\varphi_P(t_3) = \varphi_Z(t_3)>0$.
 
From \eqref{summary_betaZ_P}, Theorem \ref{thm:constraints}, and from the definition of $t_3$, for $t\in (t_3,T)$ we have, $
\dot{\varphi}_P=-\beta(Z) G S f'(Z+P)-(\varphi_P- \varphi_Z) \beta(Z) \gamma Z  - a S \varphi_P \gamma Z <  - a S \varphi_P \gamma Z \leq - M_{13} \varphi_P$,
for some $M_{13}>0$ that is an upper-bound to the continuous $a _1 S \gamma Z$ over this interval. Thus, $\varphi_P(t_3)>\varphi_P(T)e^{-M_{13}(t_3 - T)}=0$.
So for $t_3>0$, we must have $\varphi_P(t_3) = \varphi_Z(t_3)>0$. From the continuity of the states and co-states, there must exist an interval leading up to $t_3$ such that $\varphi_Z(t)>0$ and $\varphi_P(t)>0$. Let $(t_4,t_3)$ be the maximal-length interval with such a property. Notice that \eqref{sum_betaZ} also applies, leading to $u_P(t)+u_Z(t)=1$ for $t\in (t_4, t_3)$.

Furthermore, also from continuity, \eqref{eq:difference3} becomes:\small
\begin{align}\label{eq:difference4}
\hspace*{-0.1in}(\dot{\varphi}_P& - \dot{\varphi}_Z)(t_3^+) = -  a\varphi_P(t_3) S(t_3) \big[\gamma(2Z(t_3)-\dfrac{\beta_{\max}}{a}) + G\big]
\end{align}\normalsize
But if $\dot{\varphi}_P(t_3^+) - \dot{\varphi}_Z(t_3^+)<0$, then due to continuity and the definition of the derivative, we must have an interval starting from $t_3$ where $\varphi_Z(t)>\varphi_P(t)$, which contradicts the definition of $t_3$ (which stated that over an interval starting at $t_3$, $\varphi_P(t)>\max\{\varphi_P(t),0\}$). So we must have $\dot{\varphi}_P(t_3^+) - \dot{\varphi}_Z(t_3^+) \geq 0$. From \eqref{eq:difference4} this is equivalent to 
$[\gamma(2Z(t_3)-\frac{\beta_{\max}}{a}) + G\big]\leq 0$, or $Z(t_3)\leq \frac{1}{2}[\frac{\beta_{\max}}{a} - \frac{G}{\gamma}]$. 

Following the same set of arguments as presented in \S \ref{part:I}  for the case of $Z(T)\leq \frac{1}{2}[\frac{\beta_{\max}}{a} - \frac{G}{\gamma}]$ and retracing them for $Z(t_3)\leq \frac{1}{2}[\frac{\beta_{\max}}{a} - \frac{G}{\gamma}]$ (with $t_3$ replacing $T$ in all arguments) shows that the structure postulated in Theorem \ref{thm:simple} holds.

Thus, all possible state and co-state trajectories lead to the structure postulated in Theorem \ref{thm:simple}.\qed

\vspace{-0.1in}
\subsection{Sigmoid $\beta(Z)$}\label{sec:sigmoid}

Assume $\beta_Z =  \dfrac{\beta_0}{1+e^{ \alpha(Z - Z_{th})}}$, with $0<Z_{th}<1$ being a fixed threshold and $\alpha>0$ denoting the sharpness of the cut-off. This simulates a threshold-like detection of zombies by a network administrator. In this case, \eqref{summary_betaZ_ZP} becomes:
\begin{align}\label{eq:difference5}
&\dot{\varphi}_P- \dot{\varphi}_Z = {- (\varphi_P- \varphi_Z) \beta(Z)[G (u_Z + u_P) +\gamma (Z + S)]  }\notag\\&\hspace{-0.05in}~ {+  \dfrac{\beta_0\gamma\varphi_PS \big[ e^{ \alpha(Z - Z_{th})} (1- \frac{\alpha}{\gamma} G (u_z + u_P)- \alpha Z) +1\big]}{(1+e^{ \alpha(Z - Z_{th})})^2}}
\end{align}
Define: $\Psi(Z, u_Z + u_P):=e^{ \alpha(Z - Z_{th})} (1- \frac{\alpha}{\gamma} G (u_z + u_P)- \alpha Z) +1$. Then \eqref{eq:difference5} becomes: 
\begin{align} \label{eq:difference6}
\dot{\varphi}_P- \dot{\varphi}_Z &= {- (\varphi_P- \varphi_Z) \beta(Z)[G (u_Z + u_P) +\gamma (Z + S)]  }\notag\\&+  \dfrac{\beta_0\gamma\varphi_PS }{(1+e^{ \alpha(Z - Z_{th})})^2}\Psi(Z, u_Z + u_P)
\end{align}
Now, for possible intervals where $u_Z+u_P$ is a constant $c\in [0,1]$, $\Psi(Z,c)$ is a function of one variable ($Z$).
We can see that at points of continuity of the controls and in intervals where it is defined, $\Psi(Z,c)$ is also continuous and differentiable. Furthermore, we can see that at points of continuity of the controls in these intervals, we have:
\begin{align}\label{eq:psi_dot}
\dfrac{d\Psi(Z,c)}{dZ}&= - \alpha^2 e^{ \alpha(Z - Z_{th})} (\frac{G}{\gamma} c +  Z) <0 
\end{align}
Now we break down the situations that can arise based on the value of $\Psi(Z(T),1)$:

\subsubsection{$\Psi(Z(T),1)>0$}\label{part:III} From $\dot{Z}\geq 0$ (\eqref{model:betaZ} and Theorem \ref{thm:constraints}) and the continuity of the states, we have $Z(t)\leq Z(T)$ for all $t$. Now for $t\in(t^*,T)$, as the sum of the controls is constant and equal to one due to \eqref{eq:sum1},  we will have $\Psi(Z(t),1)\geq \Psi(Z(T),1)>0$ due to \eqref{eq:psi_dot}. Thus from \eqref{eq:difference6} and for all $t\in (t^*,T)$ at which the controls are continuous: $
\dot{\varphi}_P(t) - \dot{\varphi}_Z (t) >  - (\varphi_P- \varphi_Z) \beta(Z)[G +\gamma (Z + S)]\geq  - (\varphi_P- \varphi_Z) M_{14}$,
for some $M_{14}>0$ which is an upper-bound to the continuous $\beta(Z)[G +\gamma (Z + S)]$. Therefore, for $t\in (t^*, T)$, $\varphi_P(t)- \varphi_Z(t)< [\varphi_P(T)- \varphi_Z(T)]e^{-M_{14}(t-T)}=0$, and thus $\varphi_P(t) < \varphi_Z(t)$ for $t\in(t^*,T)$. 

Due to the continuity of the states and co-states and from the definition of $t^*$, there exists an interval $(t',T)$, with $t'\leq t^*$ such that  $\varphi_Z(t)>\varphi_P(t)$ and $\varphi_Z(t)>0$. These conditions, coupled with \eqref{eq:c} lead to $u_P(t)=0$ and $u_Z(t)=1$ for all $t \in (t', T)$, with the corollary that $u_P(t)+ u_Z(t)=1$.

We now prove $t'=0$. If this does not hold, either $\varphi_Z(t') = 0$ or $\varphi_Z(t') = \varphi_P(t')>0$ for $t'>0$ due to continuity of the states and co-states.

For $t\in (t',T)$, as $u_P(t)=0$, \eqref{summary_betaZ_Z} becomes: $\dot{\varphi}_Z(t) =
-  \beta(Z) G S f'(Z+P)  -  \varphi_Z \beta(Z) \gamma S <0$,
as each term in the right hand side is strictly positive in the interval. Now, if we have $\varphi_Z(t') = 0$, from this time-derivative and continuity of the states and co-states we must have $\varphi_Z(t')>\varphi_Z(T) = 0$. Thus, $\varphi_Z(t') = 0$ is ruled out.

On the other hand, if $\varphi_Z(t')=\varphi_P(t')>0$, then from \eqref{eq:difference6} and the continuity of the states and co-states: $
\big(\dot{\varphi}_P - \dot{\varphi}_Z\big) (t'^+) =  \dfrac{\beta_0\gamma\varphi_P(t') S(t') }{(1+e^{ \alpha(Z (t') - Z_{th})})^2}\Psi(Z(t'),1)>0$,
leading to the existence of an interval $(t', t'')$ over which $\varphi_P(t)>\varphi_Z(t)$, a contradiction with the definition of $t'$. 

Thus, $t'=0$ and $u_Z(t)=1$ and $u_P(t)=0$ for all $t$, which agrees with the statement of Theorem \ref{thm:simple}.

\subsubsection{$\Psi(Z(T),1)=0$ and $Z(T)>0$}\label{part:IV} We have $\dot{Z}(T^-)>0$ (from \eqref{model:betaZ}, Theorem \ref{thm:constraints}, and continuity) which leads to $Z(t)< Z(T)$ for an interval leading up to $t$. As $\dot{Z}\geq 0$, we can extend $Z(t)<Z(T)$ to all $t$. Now for $t\in(t^*,T)$, from \eqref{eq:sum1},  we will have $\Psi(Z(t),1) > \Psi(Z(T),1)=0$ due to \eqref{eq:psi_dot}. We now prove  $t'=0$ and $u_Z(t)=1$ and $u_P(t)=0$ for all $t$.

 From  \eqref{eq:sum1}, \eqref{eq:difference6}, for all $t\in (t^*,T)$ (over which $\varphi_P(t)>0$): $\dot{\varphi}_P(t) - \dot{\varphi}_Z (t) >  - (\varphi_P- \varphi_Z) \beta(Z)[G +\gamma (Z + S)]\geq  - (\varphi_P- \varphi_Z) M_{12}$
for some $M_{12}>0$ which is an upper-bound to the continuous $\beta(Z)[G +\gamma (Z + S)]$ over this interval. Therefore, for $t\in (t^*, T)$, $\varphi_P(t)- \varphi_Z(t)< [\varphi_P(T)- \varphi_Z(T)]e^{-M_{12}(t-T)}=0$, and thus $\varphi_P(t) < \varphi_Z(t)$ for $t\in(t^*,T)$. 

Due to the continuity of the states and co-states and because for $t\in(t^*,T)$, $\varphi_Z(t)>0$, there exists an interval $(t',T)$, with $t'\leq t^*$ such that both $\varphi_Z(t)>\varphi_P(t)$ and $\varphi_Z(t)>0$. These conditions, coupled with \eqref{eq:c} lead to $u_P(t)=0$ and $u_Z(t)=1$ for all $t \in (t', T)$.

We now prove $t'=0$. If this does not hold, either (i) $\varphi_Z(t') = 0$  or (ii) $\varphi_Z(t') = \varphi_P(t')$ for some $t'>0$ due to continuity of the states and co-states.

For $t\in (t',T)$ \eqref{summary_betaZ_Z} becomes: $
\dot{\varphi}_Z(t) =
-  \beta(Z) G S f'(Z+P)  -  \varphi_Z \beta(Z) \gamma S <0$,
which leads to $\varphi_Z(t')>\varphi_Z(T) = 0$.

So for $t'>0$ we must have $\varphi_Z(t')=\varphi_P(t')$. From \eqref{eq:difference6} and the continuity of the states and co-states: $
\big(\dot{\varphi}_P - \dot{\varphi}_Z\big) (t'^+) =
\dfrac{\beta_0\gamma\varphi_P(t')S(t') }{(1+e^{ \alpha(Z(t') - Z_{th})})^2}\Psi(Z(t'), 1)=\dfrac{\beta_0\gamma\varphi_Z(t')S(t') }{(1+e^{ \alpha(Z(t') - Z_{th})})^2}\Psi(Z(t'), 1)
>0$,
leading to the existence of an interval $(t', t'')$ over which $\varphi_P(t)>\varphi_Z(t)$, a contradiction with the definition of $t'$. 

Thus, $t'=0$ and $u_Z(t)=1$ and $u_P(t)=0$ for all $t$, which agrees with the statement of Theorem \ref{thm:simple}.
\qed

\subsubsection{$\Psi(Z(T),1)=0$ and $Z(T)=0$}\label{part:V} We must have $\dot{Z}(t)=0$ for all $t$ as $\dot{Z}\geq 0$ and as states are continuous. The only way for $\dot{Z}(t)=0$ for all $t$ is for us to have $Z_0=0$ and $u_Z(t)=0$ for all $t<T$ (due to Theorem \ref{thm:constraints}). This leads to \eqref{summary_betaZ_P} becoming $\dot{\varphi}_P(t) ={-\beta(0) G S(t) f'(P(t))<0}$ for all $t<T$, and thus $\varphi_P(t)>0$. This fact, combined with $u_Z(t)=0$ for all $t$ and \eqref{eq:b} leads to $u_P(t)=1$ for all $t$.

\subsubsection{$\Psi(Z(T),1)<0$} Due to the continuity of the states, $\Psi(Z(t),1)<0$ for $t\in (t_1,T)$ for some $t_1$. Thus, \eqref{eq:difference6} leads to $
\dot{\varphi}_P(t) - \dot{\varphi}_Z (t) <  - (\varphi_P- \varphi_Z) \beta(Z)[G  +\gamma (Z + S)]\leq  - (\varphi_P- \varphi_Z) M_{12}$, 
for $t\in (t_2,T)$, where $t_2=\max\{t^*,t_1\}$ and with $M_{12}$ defined as before (an upper-bound to the continuous $\beta(Z)[G +\gamma (Z + S)]$ over this interval). Therefore, in this interval, $\varphi_P(t)- \varphi_Z(t)> [\varphi_P(T)- \varphi_Z(T)]e^{-M_{12}(t-T)}=0$, and thus $\varphi_P(t) > \varphi_Z(t)$ and $\varphi_P(t)>0$ for $t\in(t_2,T)$. 

Now, due to the continuity of the states and co-states, define $(t_3,T)$ to be the maximal length interval over which $\varphi_P(t) > \varphi_Z(t)$ and $\varphi_P(t)>0$. Note that for $t\in(t_3,T)$ we have (due to \eqref{eq:b}) $u_Z(t)=0$ and $u_P(t)=1$\hide{, and thus $u_P(t)+ u_Z(t)=1$}.

Due to continuity of the states and co-states, either $t_3=0$, in which case $u_Z(t)=0$ and $u_P(t)=1$ for all $t$, or we have a $t_3>0$ such that (i) $\varphi_P(t_3)=0$ or (ii) $\varphi_P(t_3) = \varphi_Z(t_3)>0$. 

From \eqref{summary_betaZ_P}, Theorem \ref{thm:constraints}, and from the definition of $t_3$, for $t\in (t_3,T)$ we have:  $
\dot{\varphi}_P=-\beta(Z) G S f'(Z+P)-(\varphi_P- \varphi_Z) \beta(Z) \gamma Z  - \dfrac{\alpha\beta_0\gamma e^{ \alpha(Z - Z_{th})}}{(1+e^{ \alpha(Z - Z_{th})})^2} S \varphi_P  Z<  - \dfrac{\alpha\beta_0\gamma e^{ \alpha(Z - Z_{th})}SZ}{(1+e^{ \alpha(Z - Z_{th})})^2} \varphi_P \leq - M_{15} \varphi_P$,
for some $M_{15}>0$ that is an upper-bound to the continuous $\dfrac{\alpha\beta_0\gamma e^{ \alpha(Z - Z_{th})}SZ}{(1+e^{ \alpha(Z - Z_{th})})^2}$. Thus, $\varphi_P(t_3)>\varphi_P(T)e^{-M_{15}(t_3 - T)}=0$. 

So for $t_3>0$ we must have $\varphi_P(t_3) = \varphi_Z(t_3)>0$. From the continuity of the states and co-states, there must exist an interval leading up to $t_3$ such that $\varphi_Z(t)>0$ and $\varphi_P(t)>0$. Let $(t_4,t_3)$ be the maximal-length interval with such a property. Notice that \eqref{sum_betaZ} also applies, leading to $u_P(t)+u_Z(t)=1$ for $t\in (t_4, t_3)$.

Furthermore, also from continuity, \eqref{eq:difference6} becomes:
\begin{align}\label{eq:difference7}
\dot{\varphi}_P(t_3^+)& - \dot{\varphi}_Z(t_3^+) = \dfrac{\beta_0\gamma\varphi_P(t_3)S(t_3)}{(1+e^{ \alpha(Z(t_3) - Z_{th})})^2}\Psi(Z(t_3),1)
\end{align}
But if $\dot{\varphi}_P(t_3^+) - \dot{\varphi}_Z(t_3^+)<0$, then due to continuity and the definition of the derivative, we must have an interval starting from $t_3$ where $\varphi_Z(t)>\varphi_P(t)$, which contradicts the definition of $t_3$. So we must have $\dot{\varphi}_P(t_3^+) - \dot{\varphi}_Z(t_3^+) \geq 0$. From \eqref{eq:difference4} this is equivalent to 
$\Psi(Z(t_3),1)\geq 0$. Following the same arguments presented in \S \ref{part:III}, \S \ref{part:IV}, and \S \ref{part:V}  for the case of $\Psi(Z(T),1)\geq0$ and retracing them for $\Psi(Z(t_3),1)\geq0$ (with $t_3$ replacing $T$) shows Theorem \ref{thm:simple}'s structure holds.

Thus, Theorem \ref{thm:simple} holds for all possible trajectories.

\end{appendices}
\end{document}